\begin{document}
\title{Mix and Match:\\ Markov Chains \& Mixing Times for Matching in Rideshare}

\author{Michael Curry\inst{1} \and
John P. Dickerson \inst{1} \and
Karthik Abinav Sankararaman \inst{1, 2} \and
Aravind Srinivasan \inst{1} \and
Yuhao Wan \inst{1, 3} \and
Pan Xu \inst{1, 4}
}
\authorrunning{CDSSWX '19}
\institute{University of Maryland, College Park MD USA 
\and
Facebook, Inc., Menlo Park CA USA 
\and
University of Washington, Seattle WA USA 
\and
New Jersey Institute of Technology, Newark NJ USA 
\\
\email{\{curry,john,srin\}@cs.umd.edu} \\
\email{\{karthikabinavs,yuhao.diane.wan,panxu0\}@gmail.com}
}

\maketitle
\begin{abstract}
	Rideshare platforms such as Uber and Lyft dynamically dispatch drivers to match riders' requests.  We model the dispatching process in rideshare as a Markov chain that takes into account the geographic mobility of both drivers and riders over time. Prior work explores dispatch policies in the limit of such Markov chains; we characterize when this limit assumption is valid, under a variety of natural dispatch policies.  We give explicit bounds on convergence in general, and exact (including constants) convergence rates for special cases.  Then, on simulated and real transit data, we show that our bounds characterize convergence rates---even when the necessary theoretical assumptions are relaxed. Additionally these policies compare well against a standard reinforcement learning algorithm which optimizes for profit without any convergence properties.

\end{abstract}

\section{Introduction}
\label{sec:intro}
Rideshare firms such as Uber, Lyft, and Didi Chuxing dynamically match riders to drivers via an online, digital platform.  Riders request a driver through an online portal or mobile app; a driver is matched by the platform to a rider based on geographic proximity, driver preferences, pricing, and other factors.  The rideshare driver then picks up the rider at her request location, transfers her to her destination, and reenters the platform to be matched again---albeit at a new geographic location.  Part of the larger \emph{sharing economy}, rideshare firms are increasingly competitive against traditional taxi services due to their ease of use, lower pricing, and immediacy of service~\citep{hahn2017ridesharing}.

Matching riders to rideshare drivers is nontrivial.  While the core process is a form of the well-studied online matching problem~\citep{mehta2013online}, current models developed in the EconCS, AI, and Operations Research communities~\citep{ozkan2017dynamic,AAAI18,sid18} do not completely capture the mobile aspects of both the drivers and riders.  Drivers \emph{and} riders are agents who move about a constrained space (e.g., city streets), becoming (in)active periodically due to the matching process.  When a platform receives a request, it must make a near-real-time dispatch decision amongst nearby drivers who are available at the current time.  The platform's goal is to maximize an objective (e.g., revenue or throughput) by servicing requests in an online fashion, subject to various real-world constraints and challenges like setting prices, predicting supply and demand, fairness considerations, competing with other firms, and so on~\citep{cachon2017role,laptev2017time,banerjee2017pricing}.

In this paper, we study the dynamics of the nascent rideshare market under different \emph{dispatch strategies}.  Recent work uses Markov chains to model complex ride-sharing dynamics in a \emph{closed-world} system---that is, a system with a fixed total supply of cars~\citep{ozkan2017dynamic,banerjee2017pricing,sid18}.  These assume the Markov chains reach their stationary distributions quickly, and thus all prior work optimizes for dispatch strategies in the limit.  As a complement, the present paper characterizes---theoretically and empirically---when that limit assumption is valid under a variety of natural strategies.

\xhdr{Our contributions.}
The main contribution in this paper is \emph{to show both theoretically and empirically the convergence rate of many natural policies to the stationary distribution}. First, we model the dispatching problem in rideshare platforms as a Markov chain. The number of states in this Markov chain is exponential in the natural size of the problem; thus, unless the chain is \emph{rapidly mixing}, the time taken to reach the stationary distribution is prohibitively large. Next, we consider two large natural classes of strategies and study the evolution of the driver distribution theoretically. We show that the Markov chains are rapidly mixing and give explicit bounds on the convergence rates. Then, we consider a special case of uniform arrival rates and compute the convergence rates \emph{exactly}, including the constants. Finally, we conduct experiments on both simulated as well as a real-world large-scale dataset to corroborate our findings. In particular, even when the assumptions needed by the theory do not necessarily hold, simulations show that the convergence behavior does not change drastically, and experiments on real data show that the theory gives direct insight on the convergence properties in practice. Additionally, compared against a standard RL algorithm, these policies perform similarly. Hence, our policies are simpler and efficient to run with theoretical convergence guarantees while performing almost as good as more complicated algorithms without such properties.

\section{Preliminaries}
\label{sec:prelims}
In this section, we define the formal model used throughout the paper.  We begin with a brief primer on Markov chains, and then show how Markov chains can be used to model rideshare markets.

\begin{definition}[Markov chain]
  A \emph{Markov chain} $\sM$ is defined by a state space $\Omega$ and a transition matrix $P$.  $P(\x, \y)$ represents the probability of reaching state $\y \in \Omega$, in one step, from state $\x \in \Omega$. $P(\x, \y)$ does not depend on states which the process was in prior to $\x$ (\ie it obeys the \emph{Markov property}).
\end{definition}

A central concept in Markov chain analysis is the notion of a \emph{limiting} distribution and a \emph{stationary} distribution.
For two distributions $\mu$ and $\nu$ on the state space $\Ome$, the \emph{total variation distance} between $\mu$ and $\nu$ is defined as $||\mu-\nu||_{TV}=\frac{1}{2} \sum_{\x \in \Ome} |\mu(\x)-\nu(\x)|$. A distribution $\pi^*$ is said to be a \emph{stationary distribution} if $\pi^*=\pi^* P$. A distribution $\pi^*$ is said to be a \emph{limiting distribution} if for every initial state $\x$, we have that $\lim_{t \rightarrow \infty}||P^{t}(\x, \cdot)-\pi^*||_{TV}=0$, where $P^t(\x, \cdot)$ denotes that distribution of states after $t$ steps, starting at $\x$. 	

To use Markov chains as an algorithmic tool, we need to understand the \emph{rate} of convergence to the stationary distribution, commonly called its \emph{mixing time}. 

\xhdr{Mixing Time $\tau(\ep)$.} Consider an irreducible and aperiodic Markov Chain $\sM$ with stationary distribution $\pi^*$. For a given $t$, let $d(t)=\max_{\x \in \Ome} ||P^t(\x, \cdot)-\pi^*||_{TV}$. The mixing time of $\cM$ is defined as $\tau(\ep)=\min\{t: d(t') \le \ep,\forall t' \ge t\}$. We say $\sM$ is \emph{rapidly mixing} if $\tau(\ep)=O\Big(poly\big(\log \frac{|\Ome|}{\ep} \big)\Big)$.

In this paper, we consider a class of strategies (as motivated by, \eg~\cite{AAAI18}) for the rideshare problem and cast it as a natural Markov chain. We then consider an objective function which depends on the limiting distribution of this chain and study convergence properties of that objective function, using the mixing properties of the Markov chain. We use Markov chains and relevant tools as the central concepts in this paper. \citet{MCBook} details classical results in this space. The first \emph{algorithmic} usage of Markov chain Monte Carlo (MCMC) methods can be traced back to the classical works of \citet{hastings1970monte,metropolis1949monte,metropolis1953equation}, commonly known as the Metropolis-Hastings algorithm. MCMC methods are a powerful tool in machine learning and we refer the reader to the survey~\cite{andrieu2003introduction}.

\xhdr{A Markov chain model of rideshare.} We now define our Markov-chain-based model of rideshare.  Consider a two-dimensional grid $\cU$ consisting of $n$ points (e.g., geographic locations). A \emph{request type} $r=(u,u')$, represented by an \emph{ordered}  pair of points, is a set of requests that start and end at locations $u \in \cU$ and  $u' \in \cU$, respectively. Let $\cR =\{r=(u,u')| u \in \cU, u' \in \cU\}$ be the set of all request types. Note that we allow request types $r=(u,u)$---that is, a request that both starts and ends at the same point $u \in \cU$. This is just for notational convenience.
		
Given a time horizon $T$, at each time (or round) $t \in [T] \doteq\{1,2,\ldots,T\}$, a request of type $r$ is sampled from $\cR$ with probability $p_r$.\footnote{We have $\sum_{r \in \cR} p_r \le 1$. Thus, with probability $1-\sum_{r \in \cR} p_r$, there is no request in any given time.} Our goal is to design a \emph{matching} (or dispatching) scheme---assigning a driver (or car) to a request---that maximizes an overall objective after $T$ rounds. In this paper, we assume that the sampling distribution $\{p_r\}$ in every round is \emph{identical and independently distributed} (IID) but \emph{unknown} to the algorithms. For notational simplicity, we also use $r$ to denote a specific online request of type $r$ when the context is clear.
	
\xhdr{Dispatching policy.} Suppose the system has $m$ identical drivers. We characterize the state of the system by a vector $\x \in \mathbb{Z}_{+}^n$,  where $x_u$ denotes the number of drivers in location $u$.  Then, we can construct a Markov chain with state space $\Omega=\{\x: x_u \in \{0,1,\ldots,c\}, \sum_u x_u=m\}$.  In this paper, we assume that $m \ll c \cdot n$ for some constant capacity $c$. A \emph{dispatching policy} (or strategy) $\sigma$ is a mapping from $\Omega \times \cR$ to $\cU \cup \{\emptyset\}$ such that at time $t$ and a state $\ind{\X}{t}{}$, when a request $r$ comes, $\sigma$ assigns that request $r$ to a potential driver at location $u_{\sigma}=\sigma(\ind{\X}{t}{}, r)$.\footnote{The choice of $u_\sigma$ can be random since $\sigma$ can be a randomized policy.} Here $u_{\sigma}=\emptyset$ denotes that the policy $\sigma$ rejects request $r$. We say $\sigma$ successfully addresses the request $r=(u,v)$ at $t$ if $\ind{X}{t}{u_{\sigma}} \ge 1$ and $\ind{X}{t}{v}<c$.  If $\sigma$ successfully addresses a request $r$, it receives a profit $w_r$.

	\xhdr{Neighborhood of $u$.} For each point $u \in \cU$, let $\cN(u)$ be the set of neighbors of $u$ with Manhattan distance\footnote{It is not critical for our purposes, but the experiments use New York city and road-distance is measured in Manhattan distance.} exactly $1$ to $u$, \ie $\cN(u)=\{u' \in \cU:  |u-u'|_M = 1\}$, where $|u-u'|_M$ denotes the Manhattan distance between $u$ and $u'$. We can assign each request with origin $u$ only to a driver in the set $\{u\} \cup \mathcal{N}(u)$.

	\xhdr{Objective functions.} For a given policy $\sigma$, let $W(\sigma,t)$ be the expected profit obtained by the policy $\sigma$ at time $t$. Denote $\I_{\X} := \I ( \ind{X}{t}{u_{\sigma}} \ge 1, \ind{X}{t}{v}<c)$ which is the indicator for the event $ \ind{X}{t}{u_{\sigma}} \ge 1$ and $ \ind{X}{t}{v}<c$. Thus the expected performance of $\sigma$ at time $t$ and the expected average performance of $\sigma$ over $T$ rounds is, 
 \begin{eqnarray}
		W(\sigma,t) = \textstyle \E\Big[\sum_{r=(u,v)\in \cR} p_r \cdot w_r \cdot  \I_{\X} \Big], &\label{eqn:OBJ-t}\\ 
 	\OBJ(\sigma, T)= \textstyle \frac{1}{T}\sum_{t=1}^{T}W(\sigma,t).& \label{eqn:OBJ}
 \end{eqnarray}
 respectively. The randomness in the state $\ind{\X}{t}{}$ depends on two sources: the random arrival of requests from distribution $\{p_r\}$ and any internal randomness used in the execution of a randomized policy $\sigma$. 
 
 In this paper, our objective is to study a class of dispatching strategies that are both \emph{effective} and \emph{stable}. Specifically, suppose $\x_0$ is the initial state, $\bp=\{p_r| r \in \cR\}$ is the arrival distribution of the request types in each round, and $\bw=\{w_r | r\in \cR\}$ is the profit vector for the request types. We then wish to answer the following questions.
 
 	\begin{enumerate}
 		\item \textbf{Effectiveness.} For a given instance $\cI=(\x_0, \bp, \bw, T)$, which non-adaptive strategy $\sigma$ maximizes $W(\sigma,t)$ and $\OBJ(\sigma, T)$?
 		\item \textbf{Stability.} For a given non-adaptive strategy $\sigma$, do the limits $\lim_{t \rightarrow \infty}W(\sigma,t)$ and $\lim_{T \rightarrow \infty} \OBJ(\sigma, T)$ exist? If so, how fast do they converge to the respective limiting values? 
  	\end{enumerate}

\section{Assumptions \& Related Work}
\label{sec:assumptions-and-rw}

Rideshare is a recent and popular innovation; thus, the body of literature surrounding this paradigm is young and quickly growing.  With that in mind, we now explicitly motivate and list the assumptions we make in our paper, and then place our model in the greater body of related work.

\xhdr{Model assumptions.} First, we assume that the number of drivers in the system remains constant in the $T$ rounds with no new driver either joining or leaving the system. This assumption is justified because (1) $T$ rounds in the online phase is typically restricted to a few hours and (2) almost all trips are local (as described later in our experimental section on real data).

\begin{figure}[!h]
\centering
  \includegraphics[scale=0.4]{./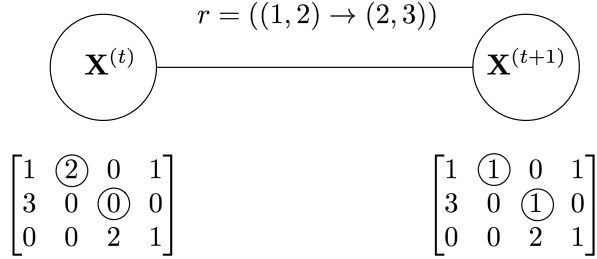}
  \setlength{\belowcaptionskip}{-15pt}
  \caption{A single state transition in the Markov chain; driver matched at $(1, 2)$, trip ends at $(2, 3)$.}
  \label{fig:example}
\end{figure}	

Second, at time $t$ when a request $r=(u,u')$ comes and is assigned to a driver at location $v$, we assume that (1) the system obtains a profit $w_{r}>0$,  which is proportional to the distance between $u$ and $u'$; (2) at time $t+1$, the number of drivers at $v$ reduces by $1$ and the number at $u'$ increases by $1$ (\ie the request is completed instantaneously). From the real dataset, we observe that most trips in the Manhattan area are local and are completed within $15$ mins (\ie a short time period). Hence things do not change drastically by making this simplifying assumption. 

Third, we assume that all locations have the same capacity $c \in \mathbb{Z}_{+}$, an upper bound on the total number of drivers that can be present in each location at any time. Here $c$ captures the maximum number of drivers allowed at any single location. 

Fourth, we make the following global \emph{hot-spot assumption} about the requests. There exists a location $u^* \in \cU$ such that $p_{(u^*,u)}>0$ and $p_{(u,u^*)}>0$ for all $u \in \cU, u \neq u^*$. We call this location $u^*$ the ``hot-spot''. This assumption naturally holds in many real scenarios (e.g., cell $(10, 5)$ in Figure~\ref{fig:heatmaps}), since most big cities have busy central locations (\eg the Empire State Building in New York City (NYC)), or have flow to and from an airport (\eg JFK or LGA, in the case of NYC). 

\begin{figure}%
                \centering
     \subfloat{{\includegraphics[scale=0.5]{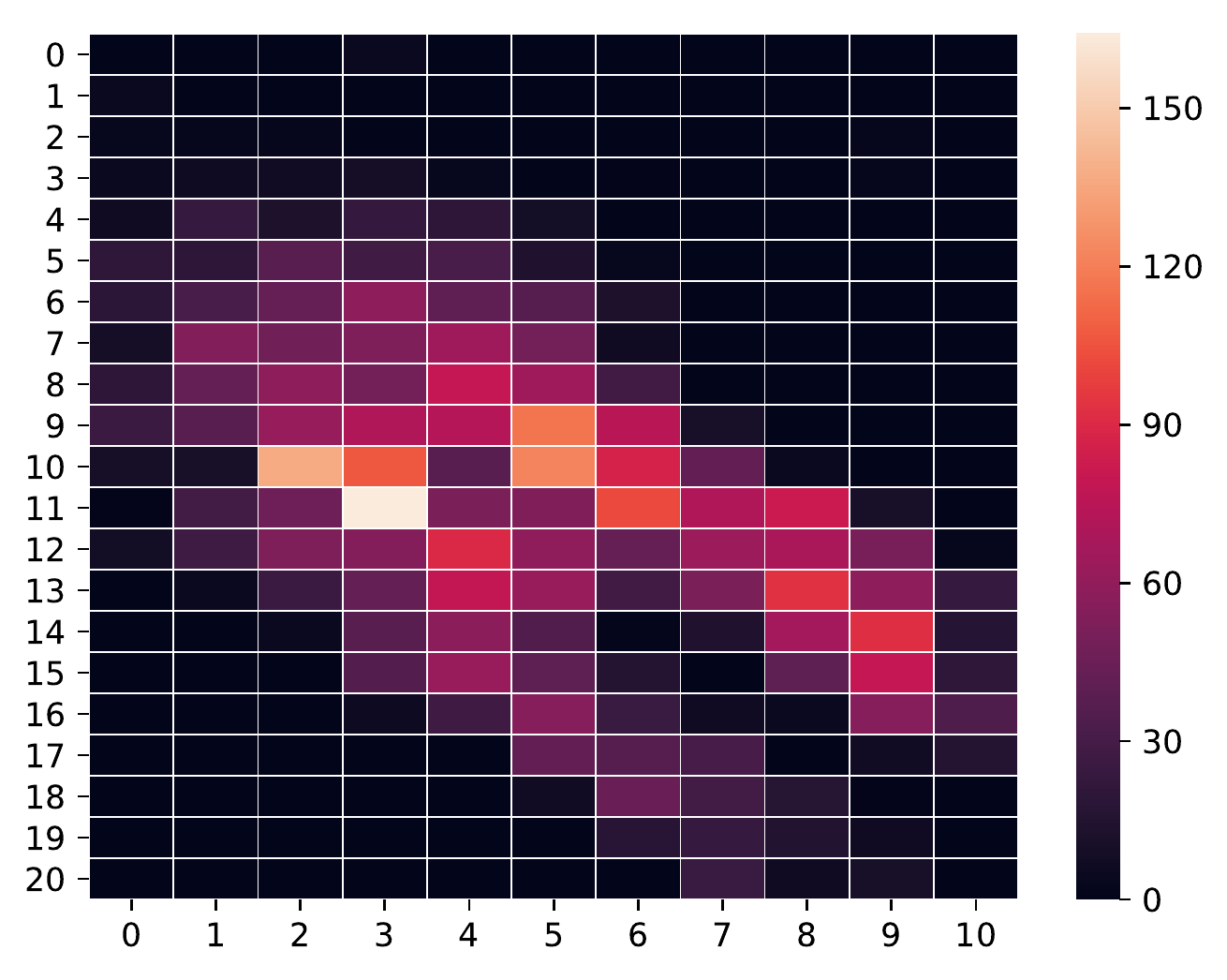} }}%
     \subfloat{ {\includegraphics[scale=0.5]{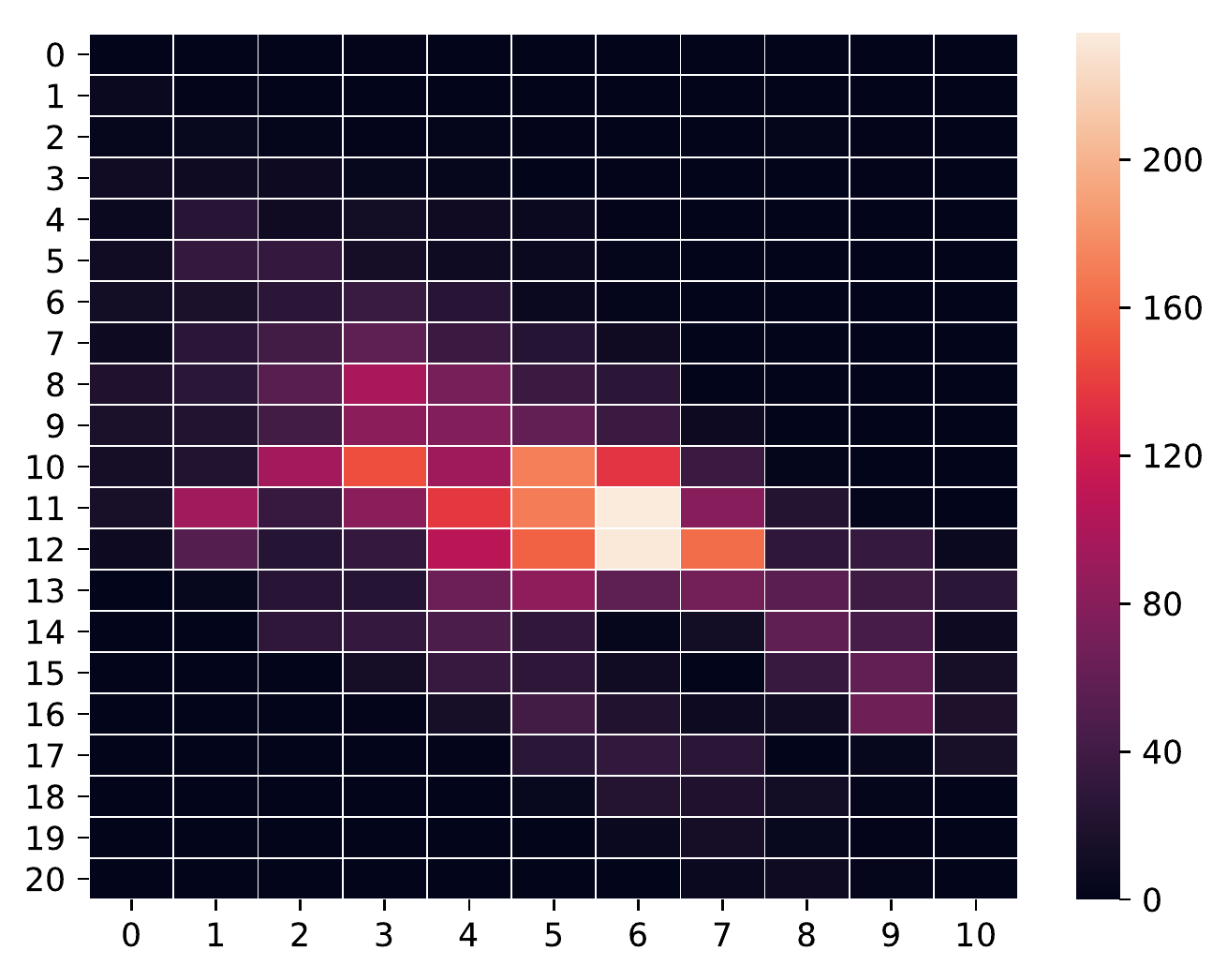} }}%
    \caption{Heatmap of starting (left) and ending (right) location of rides in NYC (7:00AM--11:00AM). X-axis, Y-axis: numbered from 0-19. Each cell represents a location. Heat index 0-black to 160-white.}%
    \label{fig:heatmaps}%
\end{figure}

Finally, we assume IID arrivals of online requests.  This is necessary for convergence---even for very simple dispatching policies. Consider the following example with two locations $u$ and $v$ in the system and a single driver. Define request types $r_a=(u,v)$ and $r_b=(v,u)$. Suppose the two online requests $r_a$ and $r_b$ arrive alternatively during odd and even rounds, and our dispatching policy is the simple greedy one: match a request $r$ if there is one driver at the starting location of $r$, otherwise, reject it. Then the Markov chain will admit no stationary distribution, since it has a period of $2$.

It is worth noting that the first, second and fifth assumptions are used in related work~\citep{banerjee2017pricing,sid18}. The third assumption is a generalization of prior works which all considered $c=m$.

\xhdr{Related work.} Research in rideshare platforms and similar allocation problems is an active area of research within multiple fields, including computer science, operations research and transportation engineering. State-independent policies were studied previously using theory from control and queuing systems~\citep{ozkan2017dynamic,braverman2016empty,banerjee2017pricing}. Apart from using Markov-chain methods, many allocation and scheduling problems have been studied in the rideshare context using methods from combinatorial optimization and machine learning (\eg \cite{AAAI18,WWW18,jia2017optimization}). A large-scale mathematical and empirical study on the number of cars and the optimality of waiting times was recently analyzed by \citet{alonso2017demand}. In our work, we do not consider the waiting times; indeed, we assume that the match and the trips happen instantaneously. The role of pricing in the dynamics of drivers in rideshare platforms is also an active area of research in computational economics and AI/ML (\eg \cite{ma2018spatio,dpAAAI18,banerjee2016dynamic}). Although rideshare platforms face challenges that are unique and different from \emph{bike-sharing} ecosystems, there are some similarities between the two. Both of these deal with a matching market where the agents are constantly moving around and hence it is important to characterize the \emph{flow} patterns of these agents. A number of works study the prediction aspects as well as the dynamics of flow patterns in both bike-sharing (\eg \cite{o2015data,ghosh2017dynamic}) as well as in ridesharing (\eg \cite{wen2017rebalancing,verma2017augmenting,laptev2017time,yao2018deep,qu2014cost}) platforms.
	
Our problem is a form of \emph{online matching in dynamic environments}, which is an active area of research within the AI/ML community. In particular, \cite{dickerson2015futurematch,lowalekar2016online,tong2016online,ICDE17} have studied algorithms for matching in various dynamic markets such as kidney exchange, spatial crowdsourcing, labor markets, and so on. Online matching in \emph{static} environments has been extensively studied in the literature; see \citet{mehta2013online} for a detailed survey. We use a reinforcement learning algorithm as a baseline in our experiments. Apart from a recent work by \citet{RLrideshare}, which explores the fleet management problem in ride-share, this tool has not been explored much in this domain. However, reinforcement learning (RL) has found success in other applications and we refer the reader to a recent survey~\cite{li2017deep}.

\xhdr{Comparison with related concurrent work.} In a recent work, \citet{sid18} proposed an approximation framework for pricing in rideshare platforms. Our model shares many characteristics with theirs (\ie the first, second and fifth assumptions).  The main difference is the focus of the paper. In particular, they are interested in finding the optimal (online) dispatching policy that has a \emph{good} approximation ratio with respect to expected revenue at the stationary distribution. Thus they consider the \emph{known IID} arrival assumption for online requests while we consider the stronger \emph{unknown} IID arrival assumption. Moreover they assume that the number of cars tends to infinity, and seek approximation ratios in this limit. The main focus of this paper is instead to characterize the \emph{rate} of convergence to the stationary distribution. We seek to understand when using the expected reward in the stationary distribution is a good measure, especially if the rate of convergence is slow and/or non-existent.

\xhdr{From dispatching to pricing.} The techniques in this paper can be extended to the scenario of dynamic pricing. Suppose that the system cannot directly match riders to drivers, but can set prices for every ride. Then, we can modify the schemes in this paper under certain conditions. Consider the pricing model as studied in \citet{banerjee2017pricing}. Every rider arrives with a value, which can be viewed as a sample from a distribution associated with its request type.  The system then sets a price (this can be arbitrarily correlated with the valuation distribution and may depend on both the time and request type but not the current state of the system) and the rider accepts the service if and only if their value is greater than the offered price. This process can be incorporated into the arrival rates of the requests and hence one can obtain similar convergence rates for this dynamic pricing problem. For brevity, we consider the problem of dispatching as the running example throughout this paper. 

\section{Dispatching Polices}

In this section, we present our dispatching policies for matching riders to rideshare drivers.  Specifically, we present two policies namely $\nadap(\alp)$ and $\rand(\phi)$. The main ideas of these are as follows. 

When a request $r=(u,u')$ arrives, $\nadap(\alp)$ checks the availability of drivers in locations $u$ and each of its neighbors in $\cN(u)$ with respective probabilities $\alp$ and $\frac{1-\alp}{4}$, where $0<\alp<1$ is a parameter. (Technically, nodes $u$ on the boundary of the grid have less than $4$ neighbors; we just assume that we do nothing if the realization is an invalid neighbor.) 

When a request $r=(u,u')$ arrives, $\rand(\phi)$ will first check the availability of drivers at $u$ and if there aren't any, then checks the neighbors in $\cN(u)$ following a given order $\phi$. Below, we present the theoretical results on their convergence rates.

\begin{theorem}[Main Convergence Theorem]
	\label{thm:mainTheorem}
For any policy $\nadap(\alp)$ with $\alp>0$ and $\rand(\phi)$ starting with any given initial state, the objective functions defined in \eqref{eqn:OBJ-t} and \eqref{eqn:OBJ} both converge to the same value with rates $\Theta(\beta^t)$ and $\Theta(T^{-1})$, respectively, for some $\beta \in (0,1)$ independent of $T$ (possibly dependent on the other input parameters such as $m$ and $n$).
\end{theorem}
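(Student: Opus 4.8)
Since $x_u\le c$ and $\sum_u x_u=m$, the state space $\Omega$ is \emph{finite}, so the whole statement reduces to standard finite-Markov-chain theory once the right structural facts are established. The plan is: (i) show that each of $\nadap(\alp)$ and $\rand(\phi)$ induces an irreducible and aperiodic transition matrix $P$ on $\Omega$, so that Perron--Frobenius theory yields a unique stationary $\pi^*$ together with geometric convergence $d(t)=\max_{\x}\|P^t(\x,\cdot)-\pi^*\|_{TV}=\Theta(\beta^t)$, where $\beta\in(0,1)$ is the second-largest eigenvalue modulus of $P$ (depending on $m,n$ but not on $T$); (ii) write $W(\sigma,t)$ as a \emph{fixed} linear functional of the time-$t$ law and push both the upper and lower bounds through it; and (iii) obtain the $\Theta(T^{-1})$ rate for $\OBJ(\sigma,T)$ by summing a geometric tail.

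\textbf{Irreducibility and aperiodicity.} For aperiodicity, every step carries a self-loop: with probability $1-\sum_r p_r>0$ no request arrives and the state is unchanged, and even when $\sum_r p_r=1$ any request whose destination is already at capacity is rejected, again fixing the state; a single self-loop in an irreducible chain forces period $1$. For irreducibility I would use the hot-spot $u^*$ as a routing hub. Both policies can match a request $(u,u')$ to a driver located \emph{exactly} at $u$: for $\rand(\phi)$ this is automatic since $u$ is probed first, and for $\nadap(\alp)$ it happens with probability $\alp>0$, which is exactly where the hypothesis $\alp>0$ is needed. Since the hot-spot assumption gives $p_{(u,u^*)}>0$ and $p_{(u^*,u)}>0$ for every $u$, a single driver at any $u$ can be moved to $u^*$, and from $u^*$ to any target location, each in one positive-probability step. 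Moving drivers one at a time, and using $m<c\cdot n$ to guarantee a free slot at $u^*$ and at each destination throughout the routing, any configuration in $\Omega$ can be transformed into any other along a positive-probability path, which gives irreducibility.

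\textbf{Transfer to the objectives.} Let $\mu_t(\cdot)=P^t(\x_0,\cdot)$ be the time-$t$ law. Averaging the indicator in \eqref{eqn:OBJ-t} over the internal randomness of $\sigma$ given the state, we may write $W(\sigma,t)=\langle\mu_t,f\rangle$ with the bounded state function $f(\x)=\E_\sigma\big[\sum_{r=(u,v)}p_r w_r\,\I_{\X}\mid\x\big]$, so $\|f\|_\infty\le\sum_r p_r w_r<\infty$. Setting $W^*=\langle\pi^*,f\rangle$, the upper bound is immediate:
\begin{equation*}
|W(\sigma,t)-W^*|=\Big|\sum_{\x}(\mu_t(\x)-\pi^*(\x))f(\x)\Big|\le 2\|f\|_\infty\,\|\mu_t-\pi^*\|_{TV}=O(\beta^t).
\end{equation*}
For the matching lower bound I would expand $\mu_t-\pi^*$ in the eigenbasis of $P$ (diagonalizable generically; Jordan blocks contribute only polynomial-in-$t$ prefactors), giving $W(\sigma,t)-W^*=\sum_{i\ge2}\lambda_i^t c_i$, whose dominant term has modulus $\beta$ and is nonzero for generic $(\x_0,f)$, hence $\Theta(\beta^t)$. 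Writing $a_t=W(\sigma,t)-W^*=\Theta(\beta^t)$, the Ces\`aro average obeys
\begin{equation*}
\OBJ(\sigma,T)-W^*=\frac{1}{T}\sum_{t=1}^{T}a_t=\frac{1}{T}\Big(S-\sum_{t>T}a_t\Big)=\frac{S}{T}+O\!\Big(\frac{\beta^T}{T}\Big),
\end{equation*}
where $S=\sum_{t\ge1}a_t$ is finite by geometric summability; thus $\OBJ(\sigma,T)\to W^*$ at rate $\Theta(T^{-1})$ whenever $S\neq0$ (and faster if $S=0$).

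\textbf{Main obstacle.} The routine parts are the linear-functional transfer and the geometric-tail estimate. The genuinely delicate step is irreducibility: one must verify that the restricted matching structure (requests served only within $\{u\}\cup\cN(u)$, with the served driver pinned to the destination $u'$) together with the hot-spot hypothesis truly connects \emph{all} of $\Omega$ while respecting the per-location capacity $c$ at every intermediate step; this is precisely where both the hot-spot assumption and $\alp>0$ are indispensable. A secondary subtlety is certifying the $\Theta$ rather than merely $O$ lower bounds, i.e.\ ruling out accidental vanishing of the coefficient of $\beta^t$ and of the constant $S$, and controlling any Jordan structure so that the leading behavior is a clean $\beta^t$.
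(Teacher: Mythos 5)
Your proposal is correct and follows essentially the same route as the paper: irreducibility and aperiodicity of the induced chains via the hot-spot assumption (with self-loops for aperiodicity and routing through $u^*$ for irreducibility), geometric total-variation convergence from the finite-chain convergence theorem, transfer of this bound to $W(\sigma,t)$ through a bounded functional of the time-$t$ law, and a Ces\`aro/geometric-tail argument yielding the $T^{-1}$ rate. The only divergence is your spectral sketch of matching lower bounds: the paper's own proof of this theorem establishes only the upper bounds, justifying the $\Theta$ separately through an explicit special-case construction (Theorem~\ref{thm:lb}), so your caveat about possible vanishing of the leading eigencoefficient and of the constant $S$ is well taken but is not something the paper's argument resolves here either.
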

Theorem~\ref{thm:mainTheorem} refers to the absolute error bound of the objectives in \eqref{eqn:OBJ-t} and \eqref{eqn:OBJ} from its limit value. In particular, Theorem \ref{thm:mainTheorem} states that as the number of rounds increases, the expected revenue in every timestep converges to a stable value \emph{exponentially} quickly. 

	\xhdr{Markov chains $\sM(\nadap(\alp))$ and $\sM(\rand(\phi))$.}  We make the simplifying assumption that for every $u \in \mathcal{U}$, the number of neighbors $|\cN(u)|=4$. Recall that $\Ome=\{\x: x_u \in \{0,1,\ldots, c\}, \sum_u x_u=m\}$. Define a Markov chain $\sM(\nadap(\alp))$ for $\nadap(\alp)$ over $\Ome$ as follows. For any \emph{ordered} pair $(\x,\y)$ where $\x, \y \in \Omega$, we say $(\x,\y)$ is a $(u,u')$-neighbor if $y_u=x_u-1$, $y_{u'}=x_{u'}+1$, and $x_{v}=y_v$ for every $v \notin \{u,u'\}$. The policy $\nadap(\alp)$  induces the following transition probability between all possible $(u,u')$-neighbors. 
		    \[
    			 q(u,u') \doteq \alp p_{(u,u')}+ \frac{1-\alp}{4} \sum_{v \in \cN(u)} p_{(v,u')}.
    		\]
  			Thus the transition matrix for the Markov chain $\sM(\nadap(\alp))$ can be defined as follows.
			\begin{equation}
				\label{eqn:trans-mat-1}
				P_\alp(\x,\y)= q(u,u') \quad \text{iff $(\x,\y)$ is a $(u,u')$ neighbor}
 			\end{equation}

Similarly, we can define a Markov chain $\sM(\rand(\phi))$ over $\Ome$ for $\rand(\phi)$ as follows.
For a given random permutation $\phi$  and $v \in \cN(u)$, let $\phi(v)$ be the random order assigned by $\phi$. Assume that $\rand(\phi)$ checks the neighbors $\cN(u)$ in the order $(\phi^{-1}(1), \phi^{-1}(2),\phi^{-1}(3), \phi^{-1}(4))$. For a given state $\x$, location $u$ and random order $\phi$, the respective \emph{supportive neighbor} of $u$ is defined as $\cN_{\x,\phi}(u)=\{v \in \cN(u): x_v=0, x_{v'}=0, \forall v' \in \cN(v), \phi(v')<\phi(u)\}$.
In other words, $\cN_{\x, \phi}(u)$ includes those neighbors $v$ of $u$ such that when we are at state $\x$ while a request $r$ with origin $v$ arrives, $\rand(\phi)$ will surely match $r$ to a driver at $u$ if present. The resultant transition matrix for the Markov chain $\sM(\rand(\phi))$ is non-zero iff $(\x,\y)$ is a $(u,u')$ neighbor. The non-zero value is as follows.
	\begin{equation}
		\label{eqn:trans-mat-2}
		P_\phi(\x,\y)= p_{(u,u')}+\sum_{{r=(v,u'): v \in \cN_{\x,\phi}(u)}} p_r.
 	\end{equation}

We use the following lemma about $\sM(\nadap(\alp))$ and $\sM(\rand(\phi))$. The proof is deferred to Appendix~\ref{sup:main}.
		
		\begin{lemma}
			\label{lem:irre}
			Under the hot-spot assumption, Markov chains $\sM(\nadap(\alp))$  and $\sM(\rand(\phi))$ defined in \eqref{eqn:trans-mat-1} and \eqref{eqn:trans-mat-2} are both irreducible and aperiodic for any given $\alp>0$ and permutation function $\phi$. Thus, from Theorem~\ref{thm:conv} in the primer on Markov chains, both $\sM(\nadap(\alp))$  and $\sM(\rand(\phi))$ admit a unique limiting distribution $\pi^*$, which coincides with the unique stationary distribution. 
		\end{lemma}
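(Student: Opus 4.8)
The plan is to reduce both claims to purely combinatorial reachability facts about single-driver moves on the grid, and to observe that the hot-spot $u^*$ serves as a universal relay. The only feature of the two transition matrices that matters is the \emph{positivity} of the relevant entries, and this is guaranteed by the hot-spot assumption: for $\sM(\nadap(\alp))$ we have $P_\alp(\x,\y)=q(u,u')\ge \alp\, p_{(u,u')}$ on every $(u,u')$-neighbor, while for $\sM(\rand(\phi))$ we have $P_\phi(\x,\y)\ge p_{(u,u')}$. Hence a $(u,u')$-neighbor move has strictly positive probability in \emph{both} chains whenever $p_{(u,u')}>0$, and by the hot-spot assumption this holds for every move of the form $u\to u^*$ and $u^*\to u$. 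The same argument therefore settles both chains simultaneously.

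\emph{Irreducibility.} I would show that for any $\x,\y\in\Ome$ there is a finite path of positive-probability transitions from $\x$ to $\y$, by induction on $D(\x,\y)\doteq \frac{1}{2}\sum_{u}|x_u-y_u|$, the number of drivers that must be relocated. If $D=0$ then $\x=\y$. Otherwise, since $\sum_u x_u=\sum_u y_u=m$, there is a surplus location $a$ with $x_a>y_a$ and a deficit location $b$ with $x_b<y_b$; it suffices to move one driver from $a$ to $b$ through $u^*$, reaching a valid state $\x'$ with $D(\x',\y)=D-1$. When $a=u^*$ or $b=u^*$ this is a single hub move, valid because the relevant capacity bound is slack ($x_b<y_b\le c$, respectively $x_{u^*}<y_{u^*}\le c$). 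When $a,b\ne u^*$ I route through the hub in two legs, $a\to u^*$ and $u^*\to b$, choosing their order according to whether the hub is currently full: if $x_{u^*}<c$ perform $a\to u^*$ first, and if $x_{u^*}=c$ perform $u^*\to b$ first. Either ordering keeps every intermediate state inside $\Ome$ and has the net effect of moving exactly one driver from $a$ to $b$.

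\emph{Aperiodicity.} Given irreducibility, it suffices to exhibit a single state with a self-loop, since all states of an irreducible chain share a common period and a self-loop forces that period to be $1$. A self-loop occurs at any state from which some positive-probability request is necessarily rejected, leaving the state unchanged in both chains. Two such states follow from the hot-spot assumption: a state with $x_{u^*}=c$ (then every request with destination $u^*$ is rejected for lack of capacity) or a state in which $u^*$ and all its neighbors are empty (then every request originating at $u^*$ is rejected for lack of an available driver, even after the re-routing performed by $\rand(\phi)$). Because $m\ll c\cdot n$, at least one such state lies in $\Ome$, giving $P_\alp(\x,\x)>0$ and $P_\phi(\x,\x)>0$ there.

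\emph{Main obstacle and conclusion.} The delicate point is the capacity constraint at the relay $u^*$ in the irreducibility step: naively pushing a driver into a full hub is illegal, and the remedy is the order-swapping trick above, which exploits the slack guaranteed by $m\ll c\cdot n$. A secondary subtlety is that under $\rand(\phi)$ a request leaving $u^*$ may be re-routed to a neighbor, so the self-loop used for aperiodicity must be a genuine rejection (destination full, or the closed neighborhood of $u^*$ empty) rather than a trivial no-op. Once both properties are established, Theorem~\ref{thm:conv} yields a unique limiting distribution coinciding with the unique stationary distribution $\pi^*$, as claimed.
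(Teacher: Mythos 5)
Your proposal is correct and takes essentially the same route as the paper's own proof: irreducibility by realizing each single-driver move as a two-leg relay through the hot-spot $u^*$, with the order of the legs $a\to u^*$ and $u^*\to b$ swapped according to whether $u^*$ is at capacity (exactly the paper's cases $d<c$ and $d=c$), and aperiodicity by exhibiting a state with a forced rejection, hence a self-loop, whose existence is argued by cases on $m$ versus $c$. Your explicit induction on the relocation distance and your choice of the ``empty closed neighborhood of $u^*$'' self-loop state (which handles the re-routing of $\rand(\phi)$ uniformly) are slightly cleaner packagings, but the underlying ideas are identical to the paper's.
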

		
\subsection{Proof of Theorem \ref{thm:mainTheorem}} \label{sec:proof-1}
	\label{sup:thm-sim}

We prove this theorem by first showing the convergence rate for the objective~\eqref{eqn:OBJ-t} for both the policies. Finally we show how this can be adapted to give the convergence rates for the objective~\eqref{eqn:OBJ}.

\xhdr{Convergence rate of $\nadap(\alp)$ for Objective~\eqref{eqn:OBJ-t}.} We first show the convergence rates for $\sM(\nadap(\alp))$. From Lemma \ref{lem:irre}, we have that $\sM(\nadap(\alp))$ is irreducible and aperiodic when $\alp>0$ and thus, it admits a unique limiting distribution (say $\pi^*$), regardless of the initial state. Suppose $\X^{(t)}$ is the random state after $t$ steps (let the starting state be $\x_0$), which follows the distribution $P^t(\vec{x}^{(0)}, .)$. Let $\X^{(\infty)}$ be the random state when $T \rightarrow \infty$, which follows the distribution $\pi^*$. For each $u,v \in \cU$, define $\gam_{u,v}$ as the probability that there is at least one driver at location $u$ and there are less than $c$ drivers at $v$ in $\pi^*$. Thus by definition we have,
	\[
		\gam_{u,v}=\Pr[\ind{X}{\infty}{u} \ge 1, \ind{X}{\infty}{v} < c]=\sum_{\x \in \Ome: x_u \ge 1, x_v<c} \pi^*(\x).
	\]

	For each $u,v$ and $t$, define $\gamma_{u,v}^{(t)} :=\Pr[\ind{X}{t}{u}\ge 1, \ind{X}{t}{v}<c]$. We prove the following Lemma~\ref{lem:thm-sim} which we later use to prove the convergence rate in the main theorem.

\begin{lemma}
	\label{lem:thm-sim}
There exists a scalar $C > 0$, independent of $t$ and $\beta \in (0,1)$ such that $|\gam_{u,v}^{(t)}-\gam_{u,v}| \le 2C \beta^t$ for any $u,v \in \cU$ and initial state $\x_0$.
\end{lemma}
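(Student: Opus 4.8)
The plan is to observe that $\gam_{u,v}^{(t)}$ and $\gam_{u,v}$ are the probabilities of one and the same event under two different distributions --- the $t$-step law $P^t(\x_0,\cdot)$ and the stationary law $\pi^*$ --- and then to control their gap by the total variation distance between those two laws. Introduce the fixed set $A_{u,v}=\{\x\in\Ome: x_u\ge 1,\ x_v<c\}$. By the definitions of $\gam_{u,v}^{(t)}$ and $\gam_{u,v}$, we have $\gam_{u,v}^{(t)}=\sum_{\x\in A_{u,v}}P^t(\x_0,\x)$ and $\gam_{u,v}=\sum_{\x\in A_{u,v}}\pi^*(\x)$. The crucial point is that $A_{u,v}$ depends only on $(u,v)$ and not on $t$ or on the underlying distribution, so the quantity we must bound is the difference of the masses the two laws assign to a single common event.

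First I would bound the gap by the triangle inequality and then enlarge the index set from $A_{u,v}$ to all of $\Ome$, which turns the sum into a total variation distance:
\begin{align*}
|\gam_{u,v}^{(t)}-\gam_{u,v}| &\le \sum_{\x\in A_{u,v}}\big|P^t(\x_0,\x)-\pi^*(\x)\big| \le \sum_{\x\in\Ome}\big|P^t(\x_0,\x)-\pi^*(\x)\big| \\
&= 2\,\|P^t(\x_0,\cdot)-\pi^*\|_{TV} \le 2\,d(t).
\end{align*}
This bound is already uniform in $u$, $v$, and in the initial state $\x_0$ (the latter because $d(t)$ maximizes over start states), which is exactly the uniformity the lemma demands.

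The only substantive step that remains is to show that $d(t)$ itself decays geometrically, i.e. that $d(t)\le C\beta^t$ for some $C>0$ and $\beta\in(0,1)$ independent of $t$. This is where Lemma~\ref{lem:irre} does the work: since $\Ome$ is finite and $\sM(\nadap(\alp))$ (respectively $\sM(\rand(\phi))$) is irreducible and aperiodic, the chain is primitive, so some power $P^r$ is strictly positive entrywise. That positivity yields a Doeblin-type minorization $P^r(\x,\y)\ge\delta\,\pi^*(\y)$ for a uniform $\delta>0$, which contracts total variation by a factor $1-\delta$ every $r$ steps; packaging this via Theorem~\ref{thm:conv} (the standard finite-chain convergence theorem, cf.\ \citet{MCBook}) gives $d(t)\le C\beta^t$ with $\beta=(1-\delta)^{1/r}\in(0,1)$. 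I would invoke the general, non-reversible form of the theorem, since the transition probabilities $q(u,u')$ are not symmetric and so an eigenvalue argument specialized to reversible chains is not directly available. Substituting into the display above yields $|\gam_{u,v}^{(t)}-\gam_{u,v}|\le 2C\beta^t$.

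I do not expect a genuine obstacle: the lemma is essentially geometric mixing specialized to the single family of events $A_{u,v}$. The one point that needs care is uniformity --- choosing the constants $C$ and $\beta$ once for the whole chain from the convergence theorem, so that they serve simultaneously for every pair $(u,v)$ and every initial state $\x_0$, rather than being extracted event-by-event.
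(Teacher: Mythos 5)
Your proposal is correct and follows essentially the same route as the paper's proof: both bound $|\gam_{u,v}^{(t)}-\gam_{u,v}|$ by the triangle inequality over the event $\{\x: x_u \ge 1, x_v < c\}$, enlarge to twice the total variation distance $\|P^t(\x_0,\cdot)-\pi^*\|_{TV}$, and then invoke the convergence theorem (Theorem~\ref{thm:conv}, i.e.\ Theorem 4.9 of \citet{MCBook}) whose hypotheses are supplied by Lemma~\ref{lem:irre}. Your added Doeblin-minorization sketch and the explicit remark on uniformity in $\x_0$ are just elaborations of what the cited theorem already provides, not a different argument.
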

\vspace{-0.1in}
\begin{proof}
Let $P^t(\x_0,\y)=\Pr[\X^{(t)}=\y]$ be the probability of reaching state $\y$ after $t$ steps with initial state $\x_0$. Thus we have,
\begin{align}
|\gam_{u,v}^{(t)}-\gam_{u,v}|  &= \Big| \sum_{\y: y_u \ge 1, y_v<c} \Pr[\X^{(t)}=\y] -\gam_{u,v}\Big|\\
&= \Big|   \sum_{\y: y_u \ge 1, y_v<c} P^t(\x_0,\y)-  \sum_{\vec{z} \in \Ome: z_u \ge 1,z_v<c} \pi^*(\vec{z}) \Big| \\
& \le  \sum_{\y: y_u \ge 1, y_v<c}  \Big| P^t(\x_0,\y)-\pi^*(\y)  \Big| \\
& \le 2| P^t(\x_0,\cdot)-\pi^* |_{TV} \le 2 C \beta^t.  \label{ineq:lem-gam}
\end{align}

In equation \eqref{ineq:lem-gam}, the first inequality  is from the definition of total variation between two distributions while the second inequality is from the convergence theorem 4.9 in \cite{MCBook} (also Theorem~\ref{thm:conv} in primer on markov chains in Appendix~\ref{sec:sup-basic}).
\end{proof}

\xhdr{Convergence rate of $\sM(\rand(\phi))$ for 	Objective~\eqref{eqn:OBJ-t}.} Recall that we proved in Lemma~\ref{lem:irre} that $\sM(\rand(\phi))$ admits a unique limiting distribution. For notational convenience, we overload $\pi^*$ and $P$ to denote the limiting distribution and transition matrix of $\sM(\rand(\phi))$. Set $\overline{\cN}(u)= \{u\}\cup \cN(u)$. Let $\Y^{(t)}$ be the random state in $\sM(\rand(\phi))$ after $t$ steps starting with $\y_0$. For a given $u$ and $v$ let $\eta_{u,v}$ and $\eta_{u,v}^{(t)}$ be the respective probabilities, in the limiting distribution $\pi^*$ and $\Y^{(t)}$, that there is at least one driver in $\overline{\cN}(u)$ and less than $c$ drivers at $v$. Therefore by definition we have that,
\begin{align*}
\eta_{u,v}&=\Pr\Big[ \bigvee_{v \in \overline{\cN}(u)} \Big( Y^{(\infty)}_v \ge 1\Big), ~Y^{(\infty)}_v<c \Big]=\sum_{\x: \sum_{v \in \overline{\cN}(u)} x_v \ge 1 , x_v<c}\pi^*(\x),\\
\eta_{u,v}^{(t)} &=\Pr\Big[ \bigvee_{v \in \overline{\cN}(u)} \Big( Y^{(t)}_v \ge 1\Big), ~Y^{(t)}_v<c\Big]=\sum_{\x: \sum_{v \in \overline{\cN}(u)} x_v \ge 1, x_v<c }P^t(\y_0,\x).
\end{align*}

Similar to Lemma \ref{lem:thm-sim}, we have the following lemma for $\sM(\rand(\phi))$.
\begin{lemma}\label{lem:thm-rand}
There exists a scalar $C > 0$ independent of $t$ and $\beta \in (0,1)$ such that $|\eta_{u,v}^{(t)}-\eta_{u,v}| \le 2C \beta^t$ for any $u,v \in \cU$ and initial state $\y_0$.
\end{lemma}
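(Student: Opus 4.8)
The plan is to mirror the proof of Lemma~\ref{lem:thm-sim} almost verbatim, since the only structural change is the event whose probability we track. First I would observe that both $\eta_{u,v}^{(t)}$ and $\eta_{u,v}$ are sums over the \emph{same} subset of states
\[
S \doteq \Big\{\x \in \Ome : \textstyle\sum_{w \in \overline{\cN}(u)} x_w \ge 1,\ x_v < c\Big\},
\]
a set that is fixed and independent of $t$; the time-$t$ quantity weights each state by $P^t(\y_0,\x)$ while the limiting quantity weights it by $\pi^*(\x)$. Hence their difference is a single signed sum of $P^t(\y_0,\x)-\pi^*(\x)$ over $S$, which is exactly the structure that made the earlier argument go through.

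Second, I would bound this difference by the triangle inequality, enlarge the index set from $S$ to all of $\Ome$ (each additional summand $|P^t(\y_0,\x)-\pi^*(\x)|$ is nonnegative), and recognize the result as twice the total variation distance:
\begin{align*}
|\eta_{u,v}^{(t)} - \eta_{u,v}|
&= \Big| \sum_{\x \in S} \big( P^t(\y_0,\x) - \pi^*(\x) \big) \Big| \\
&\le \sum_{\x \in S} \big| P^t(\y_0,\x) - \pi^*(\x) \big| \\
&\le \sum_{\x \in \Ome} \big| P^t(\y_0,\x) - \pi^*(\x) \big|
= 2\, | P^t(\y_0,\cdot) - \pi^* |_{TV}.
\end{align*}

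Third, I would invoke Lemma~\ref{lem:irre}, which guarantees that $\sM(\rand(\phi))$ is irreducible and aperiodic with unique stationary (and limiting) distribution $\pi^*$. This lets me apply the geometric convergence theorem (Theorem 4.9 in \cite{MCBook}), yielding constants $C>0$ and $\beta \in (0,1)$, both independent of $t$ and of the initial state $\y_0$, such that $| P^t(\y_0,\cdot) - \pi^* |_{TV} \le C\beta^t$. Chaining this with the display above produces the claimed bound $|\eta_{u,v}^{(t)} - \eta_{u,v}| \le 2C\beta^t$.

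I do not expect any genuine obstacle: the argument is structurally identical to Lemma~\ref{lem:thm-sim}, and the more intricate machinery behind $\rand(\phi)$---the supportive-neighbor set $\cN_{\x,\phi}(u)$ and the transition matrix \eqref{eqn:trans-mat-2}---plays no role in this particular bound, because everything is funneled through the total variation distance and the event $S$ is a fixed subset of $\Ome$. The only point deserving a word of care is to confirm that $\eta_{u,v}^{(t)}$ and $\eta_{u,v}$ are indeed integrals of the \emph{same} indicator against $P^t(\y_0,\cdot)$ and $\pi^*$ respectively, which is immediate from the definitions given just before the lemma statement.
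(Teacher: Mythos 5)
Your proposal is correct and follows exactly the paper's approach: the paper itself proves this lemma simply by noting that the argument of Lemma~\ref{lem:thm-sim} carries over verbatim, since that argument never uses any property of the policy beyond irreducibility and aperiodicity (Lemma~\ref{lem:irre}) and the geometric convergence theorem, and the event defining $\eta_{u,v}$ is a fixed subset of $\Ome$ just like the one defining $\gam_{u,v}$. Your write-up merely makes explicit the total-variation chaining that the paper leaves implicit, which is fine.
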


The proof of Lemma \ref{lem:thm-rand} is essentially the same as that of Lemma~\ref{lem:thm-sim} since it does not use the properties of $\nadap(\alp)$.

Now we have all ingredients to prove main Theorem \ref{thm:mainTheorem}.
\begin{proof}

We focus on applying Lemma  \ref{lem:thm-sim} to prove the results for $\sM(\nadap(\alp))$ in Theorem~\ref{thm:mainTheorem}. A similar analysis applies to the case $\sM(\rand(\phi))$ by using  Lemma \ref{lem:thm-rand}.

Consider the case $\sigma=\nadap(\alp)$ and a given $r=(u,u')$. Policy $\nadap(\alp)$ implies that $u_{\sigma}=\sigma(\X^{(t)}, r)$ will be equal to $u$ with probability $\alp$ and any neighbor in $\cN(u)$ with probability $\frac{1-\alp}{4}$. Thus, we claim that 
\begin{align*}
\lim_{t \rightarrow \infty}\E\Big[ \I(X^{(t)}_{u_{\sigma}}\ge 1, X^{(t)}_v<c)\Big] &=\alp \Pr[X^{(\infty)}_u \ge 1, X^{(\infty)}_v <c]+\frac{1-\alp}{4}\sum_{k \in \cN(u)}\Pr[X^{(\infty)}_{k} \ge 1, X^{(\infty)}_v <c] \\
&=\alp \gam_{u,v}+ \frac{1-\alp}{4} \sum_{k \in \cN(u)} \gam_{k,v}
\end{align*}
Let $W(\alp)= \lim_{t \rightarrow \infty} W(\nadap(\alp), t) $. Thus we claim that
\begin{align}
 W(\alp) &=\lim_{t \rightarrow \infty}\E\Big[\sum_{r=(u,v) \in \cR} p_r \cdot w_r \cdot  \I \big( \ind{X}{t}{u_{\sigma}} \ge 1, X^{(t)}_v<c\big)\Big] \nonumber \\
 &=\sum_{r=(u,v) \in \cR} p_r \cdot w_r \cdot \Big(  \alp \gam_{u,v}+ \frac{1-\alp}{4} \sum_{k \in \cN(u)} \gam_{k,v} \Big). \label{eq:thm1Later}
\end{align}

Now we bound the convergence rate of $W(\nadap(\alp),t)$. Let $\Delta(t) \doteq | W(\nadap(\alp), t) - W(\alp) |$ (we omit the subscript of $\alp$ in $\Delta(t)$ here) and $\wmax \doteq \max_{r \in \cR} w_r$. We have

\begin{align}
\Delta(t)&=
\Big|\sum_{r=(u,v) \in \cR} p_r \cdot w_r \cdot 
\Big( \alp \Pr[X^{(t)}_u \ge 1, X^{(t)}_v <c]+\frac{1-\alp}{4}\sum_{k \in \cN(u)}\Pr[X^{(t)}_{k} \ge 1,  X^{(t)}_v <c] \Big) \\
& - \sum_{r=(u,v) \in \cR} p_r \cdot w_r \cdot \Big(  \alp \gam_{u,v}+ \frac{1-\alp}{4} \sum_{k \in \cN(u)} \gam_{k,v} \Big) \Big|\\
&\le  \sum_{r=(u,v) \in \cR} p_r \cdot w_r \cdot 
\left( \alp \big| \gam_{u,v}^{(t)} -\gam_{u,v} \big|+ \frac{1-\alp}{4} \sum_{k \in \cN(u)}  \big| \gam_{k,v}^{(t)}-\gam_{k,v}\big|
\right)\\
& \le\sum_{r=(u,v) \in \cR} p_r \cdot w_r \cdot (2C \beta^t)\label{ineq:proof-thm1-a}\\
& \le  2 \wmax C \beta^t. \label{ineq:proof-thm1-b}
\end{align}
Inequality \eqref{ineq:proof-thm1-a} is a direct application of Lemma~\ref{lem:thm-sim} and the fact that $|\cN(u)| \le 4$ for each $u$; Inequality~\eqref{ineq:proof-thm1-b} is due to the fact that $\sum_{r \in \cR} p_r\le 1$.

Recall that $\OBJ(\alp)=\lim_{T\rightarrow \infty} \OBJ(\nadap(\alp), T)$.
We can verify that $\OBJ(\alp)=W(\alp)$. Now we bound the convergence rate of $\OBJ(\nadap(\alp),T)$.  Let $\widehat{\Delta}(T) \doteq | \OBJ(\nadap(\alp), T) - \OBJ(\alp) |$. Notice that we have the following.
\begin{align}
\label{eq:ProofEq2}
\widehat{\Delta}(T)& \le \frac{1}{T}\sum_{t=1}^{T} \Delta(t) =   \frac{1}{T}\sum_{t=1}^{T} 2\wmax C \beta^t \le \frac{2\wmax C}{T} \frac{\beta}{1-\beta}.
\end{align}
This completes the proof.
\end{proof}

\subsection{A Special Case: Uniform Arrivals of Online Requests}
 We now focus on the policy, $\nadap(\alp)$, and
consider a special case when the requests arrive as a uniform sample in each round (\ie $p_r=p$ for all $r \in \cR$). Then we can compute explicit values of $C$ and $\beta$. Note that in this case, the transition matrix for $\sM(\nadap(\alp))$ is $P_\alp (\x,\y)=\alp p_{(u,u')}+ \frac{1-\alp}{4} \sum_{v \in \cN(u)} p_{(v,u')}=p$.

Consider a given instance of the rideshare setting $\cI=(\x_0, \bp, \bw, T)$ and $\nadap(\alp)$. \\
Let $W(\alp) \doteq  \lim_{t \rightarrow \infty} W(\nadap(\alp),t)$,
 and $\OBJ(\alp) \doteq \lim_{T \rightarrow \infty} \OBJ (\nadap(\alp), T)$, where $W(\nadap(\alp),t)$ and $ \OBJ (\nadap(\alp), T)$ are defined in Equations \eqref{eqn:OBJ-t} and \eqref{eqn:OBJ} respectively. 
From Lemma~\ref{lem:irre}, we have that the limits in Equations~\eqref{eqn:OBJ-t} and \eqref{eqn:OBJ} both exist and are the same for any given $\alp>0$ with the limiting value as in Equation~\eqref{eqn:obj-sim}.  Define $\Delta_\alp(t) \doteq |W(\alp)-W(\nadap(\alp),t)|$ and $\widehat{\Delta}_\alp(T) \doteq |\OBJ(\alp) - \OBJ (\nadap(\alp), T)|$. From Equations~\eqref{eq:thm1Later}, \eqref{ineq:proof-thm1-b} and \eqref{eq:ProofEq2} in the proof of Theorem~\ref{thm:mainTheorem} we have
\begin{align}
\label{eqn:obj-sim}
	& W(\alp) =\OBJ(\alp)  = \sum_{r=(u,v) \in \cR} p_r \cdot w_r \cdot \Big(  \alp \gam_{u,v}+ \frac{1-\alp}{4} \sum_{k \in \cN(u)} \gam_{k,v} \Big). \\
  & \Delta_\alp(t) \le 2 \wmax C \beta^t, ~ \widehat{\Delta}_\alp(T) \le   \frac{1}{T}\sum_{t=1}^{T} \Delta_\alp (t) \le \frac{2\wmax C}{T} \frac{1}{1-\beta}.   \label{eqn:Del-sim}  
\end{align}

 In Equation~\eqref{eqn:obj-sim}, $\gam_{u,v}$ is the probability that location $u$ has at least one driver and location $v$ has fewer than $c$ drivers in the limiting distribution $\pi^*$.  In Inequalities~\eqref{eqn:Del-sim}, $\wmax \doteq \max_{r \in \cR} w_r$ and $C>0$ and $\beta \in (0,1)$ are two scalars which are independent of $t$ and $T$, but may be related to parameters $m$ and $n$.  This yields the result for $\nadap(\alp)$; we can get similar results for the policy $\rand(\phi)$.

Observe that if $(\x,\y)$ is a $(u,u')$-neighbor, then $(\y,\x)$ is a $(u',u)$-neighbor and therefore we have $P_\alp(\y,\x)=p$, implying that $P_\alp$ is symmetric. This implies that $\sM(\nadap(\alp))$ admits a unique uniform limiting distribution $\pi^*$ over $\Ome$ (details in Appendix~\ref{sec:sup-basic}). This enables us to get a closed-form expression for $W(\alp)$ (Theorem~\ref{thm:obj}), and derive explicit values of $C$ and $\beta$ (Theorem~\ref{thm:error}).

\begin{theorem}[Closed-form expression for $W(\alp)$]
\label{thm:obj}
Consider the uniform arrival distribution of the requests such that $p_r=p$ for all $r \in \cR$ and $c=m$. Objectives defined in Equations \eqref{eqn:OBJ-t} and \eqref{eqn:OBJ} for $\nadap(\alp)$ both converge to $\frac{m p}{n+m-1} \sum_{r\in \cR} w_r$ for any given $\alp>0$. 
\end{theorem}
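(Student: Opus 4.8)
The plan is to turn the closed form \eqref{eqn:obj-sim} into a counting problem via the uniform stationary law, and then to show that averaging over $\{u\}\cup\cN(u)$ erases the dependence on $\alp$. As noted just before the statement, $p_r=p$ makes $P_\alp$ symmetric, so the unique stationary (hence limiting) distribution $\pi^*$ furnished by Lemma~\ref{lem:irre} is uniform on $\Ome$. Because $c=m$, the bound $x_u\le c$ is never active, so $\Ome$ is exactly the set of compositions of $m$ into $n$ nonnegative parts and $|\Ome|=\binom{n+m-1}{n-1}$.

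The crux is evaluating $\gam_{u,v}$, which under the uniform law is simply the fraction of states having a driver at $u$ and spare room at $v$. First I would compute the single-coordinate marginal by stars-and-bars: the states with $x_u=0$ are the compositions of $m$ over the remaining $n-1$ coordinates, so
\[
\Pr_{\pi^*}[x_u=0]=\frac{\binom{n+m-2}{n-2}}{\binom{n+m-1}{n-1}}=\frac{n-1}{n+m-1},\qquad \Pr_{\pi^*}[x_u\ge 1]=\frac{m}{n+m-1}.
\]
Next I would discharge the destination constraint $x_v<c=m$. For $u\neq v$ the event $x_u\ge 1$ already forces $x_v\le m-1$, so the constraint is vacuous and $\gam_{u,v}=\Pr_{\pi^*}[x_u\ge 1]=\tfrac{m}{n+m-1}=:g$; the only configuration it could ever remove is the lone saturated state with all $m$ drivers stacked on $v$.

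Substituting $\gam_{\cdot,v}=g$ into \eqref{eqn:obj-sim} and invoking the standing assumption $|\cN(u)|=4$, the bracketed factor telescopes,
\[
\alp\,\gam_{u,v}+\frac{1-\alp}{4}\sum_{k\in\cN(u)}\gam_{k,v}=\alp g+\frac{1-\alp}{4}\cdot 4g=g,
\]
independently of $\alp$ and of the pair $(u,v)$. Hence $W(\alp)=\sum_{r=(u,v)\in\cR}p\,w_r\,g=g\,p\sum_{r\in\cR}w_r=\frac{mp}{n+m-1}\sum_{r\in\cR}w_r$, and since $\OBJ(\alp)=W(\alp)$ by \eqref{eqn:obj-sim}, the objective in \eqref{eqn:OBJ} shares this limit.

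The main obstacle is the clean treatment of the destination-capacity constraint in the \emph{diagonal} terms. For $u\neq v$ exactness is automatic, as above; the delicate contributions are the self-matching ones ($u=v$, and the neighbor $k=v$ arising when $v\in\cN(u)$), where excluding the single saturated state shifts $\gam$ by $1/|\Ome|$. The $u=v$ terms vanish because $w_{(u,u)}=0$ (profit is proportional to trip length), and the residual neighbor terms contribute only an $O(1/|\Ome|)$ correction of exponentially small magnitude. This correction is the sole gap between the raw count and the clean value $\frac{mp}{n+m-1}\sum_{r\in\cR}w_r$ asserted in the theorem, so the technical heart of the argument is verifying that it is negligible (equivalently, absorbed) and that the binomial-ratio evaluation of $\Pr_{\pi^*}[x_u\ge 1]$ is carried out correctly.
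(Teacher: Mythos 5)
Your proposal is correct and takes essentially the same route as the paper's own proof: symmetry of $P_\alp$ under uniform arrivals gives a uniform stationary distribution over $\Ome$, a stars-and-bars count gives $\gam_{u,v}=\frac{m}{n+m-1}$, and substituting into \eqref{eqn:obj-sim} cancels the dependence on $\alp$ because $|\cN(u)|=4$. The diagonal subtlety you flag---that $\gam_{v,v}$ falls short of $\gam_v$ by $1/|\Ome|$---is genuine, but the paper's proof glosses over it entirely (it asserts $\gam_{u,v}=\gam_u$ for \emph{all} $u,v\in\cU$ with a justification that only covers $u\neq v$), so your treatment is, if anything, the more careful of the two.
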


The condition $c=m$ in Theorem \ref{thm:obj} implies that there is no constraint on the maximum number of drivers in any location. The main idea of proof for Theorem~\ref{thm:obj} is to show that for any $(u,v)$, $\gam_{u,v}=\frac{m}{n+m-1}$ for the special case. We defer the full proof of Theorem \ref{thm:obj} to Appendix~\ref{sup:main}.

\begin{theorem}[Explicit values of $C$ and $\beta$]
\label{thm:error}
Consider the uniform arrival distribution of the requests such that $p_r=n^{-2}$ for all $r \in \cR$ and $c \le 2$. For any given $\alp>0$,  Objectives in \eqref{eqn:OBJ-t} and \eqref{eqn:OBJ} for $\nadap(\alp)$ both converge with a rate upper bounded by (\ie as least as fast as) $\frac{4m\sum_{r \in \cR} w_r}{n^2 \exp(t/n^2)}$ and $ \frac{4m  \sum_{r \in \cR} w_r }{T}$, respectively.
\end{theorem}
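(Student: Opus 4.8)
The plan is to reduce the theorem to a single quantitative bound on the total variation mixing, $d(t)=\max_{\x_0}\|P^t(\x_0,\cdot)-\pi^*\|_{TV}$, and then feed explicit constants into the estimates already established for Theorem~\ref{thm:mainTheorem}. Because $p_r=n^{-2}$ for every $r$ makes $P_\alp(\x,\y)=n^{-2}$ on every ordered neighbor pair, the chain is symmetric, hence reversible with uniform $\pi^*$ on $\Omega$, and this is what I will exploit. Concretely I aim to prove $d(t)\le m\,(1-n^{-2})^t\le m\,e^{-t/n^2}$, so that the constants $C=2m$ and $\beta=e^{-1/n^2}$ of Lemma~\ref{lem:thm-sim} are valid (the factor $2$ is slack). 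Substituting these into \eqref{ineq:proof-thm1-a} and using $\sum_r p_r w_r=n^{-2}\sum_r w_r$ gives $\Delta_\alp(t)\le 2C\beta^t\,n^{-2}\sum_r w_r=\frac{4m\sum_r w_r}{n^2 e^{t/n^2}}$, which is the first claimed bound. For the time average I would follow \eqref{eq:ProofEq2}, namely $\widehat{\Delta}_\alp(T)\le\frac1T\sum_{t=1}^T\Delta_\alp(t)$, and then use the elementary inequality $\sum_{t\ge1}e^{-t/n^2}=\frac{1}{e^{1/n^2}-1}\le n^2$ (from $e^x-1\ge x$) to cancel the $n^{-2}$ and obtain $\widehat{\Delta}_\alp(T)\le\frac{4m\sum_r w_r}{T}$.

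Everything thus rests on the mixing bound $d(t)\le m\,e^{-t/n^2}$, and here the generic spectral estimate is too weak: for a uniform $\pi^*$ it contributes a prefactor $\tfrac12\sqrt{|\Omega|}$, which is exponential in $n$ and $m$. Instead I would use path coupling (Bubley--Dyer) with the metric $D(\x,\y)=\tfrac12\sum_u|x_u-y_u|$, whose diameter on $\Omega$ is at most $m$. Both copies are driven by common randomness: at each step one draws a single uniform ordered pair $(u,u')$ (probability $n^{-2}$) and, in each copy, moves a driver from $u$ to $u'$ whenever that copy has a driver at $u$ and fewer than $c$ drivers at $u'$. Path coupling reduces the whole analysis to a pair of states at distance one, say $\y$ equal to $\x$ with a single driver relocated from a site $b$ to a site $a$ (so $y_a=x_a+1$, $y_b=x_b-1$); the target is the one-step contraction $\E[D(\X^{(1)},\Y^{(1)})]\le 1-n^{-2}$.

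The one-step bound splits into two verifications. First, that the common-randomness move never raises the distance above one: when both copies move, or neither moves, the difference $\y-\x$ is unchanged, and a short case analysis of the remaining situations---where exactly one copy can execute $(u,u')$, which can only happen because the copies disagree at $a$ or $b$---shows that the single $(+1,-1)$ discrepancy is merely shifted, never duplicated. Second, that a ``resolving'' pair, one that merges the two copies, is always available and is selected with probability $n^{-2}$. This is the crux, and it is exactly where the hypothesis $c\le 2$ enters: since $y_a=x_a+1\le c$ forces $x_a\le c-1\le 1$, the excess coordinate satisfies $x_a\in\{0,1\}$, and then either the pair $(a,b)$ (when $x_a=0$, so the $\x$-copy cannot move and only the $\y$-copy slides its extra driver from $a$ to $b$) or the pair $(b,a)$ (when $x_a=1=c-1$) collapses the two states into one. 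For $c\ge 3$ this is false---one can have $2\le x_a\le c-2$ with no single move equalizing the copies---which is why the explicit constant is proved only for $c\le 2$. Granting these two facts, $\E[D(\X^{(1)},\Y^{(1)})]\le 1-n^{-2}$; path coupling then yields $\E_{\x_0,\y_0}[D(\X^{(t)},\Y^{(t)})]\le m(1-n^{-2})^t$, and the coupling inequality together with $D\ge\mathbf{1}[\X^{(t)}\ne\Y^{(t)}]$ and $1-n^{-2}\le e^{-1/n^2}$ delivers $d(t)\le m\,e^{-t/n^2}$, closing the argument. I expect the finite case check in the first verification to be the only genuinely fiddly part.
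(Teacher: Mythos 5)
Your proposal is correct and follows essentially the same route as the paper: the paper likewise proves the one-step contraction $\E[\bd(\X',\Y')]\le (1-n^{-2})\,\bd(\X,\Y)$ for the identity coupling via path coupling (its Lemma~\ref{lem:mix}, with metric $\bd(\x,\y)=\sum_u|x_u-y_u|$, diameter $2m$, and the same role of $c\le 2$ in guaranteeing a merging request), and then plugs $C=2m$, $\beta=e^{-1/n^2}$ into Lemma~\ref{lem:thm-sim} and the estimates \eqref{ineq:proof-thm1-a} and \eqref{eq:ProofEq2} exactly as you do. Your deviations are cosmetic rather than substantive: a halved metric, a case analysis organized by which request resolves the discrepancy (rather than the paper's enumeration of configurations of $(X_a,X_b)$), a direct bound on $d(t)$ instead of inverting $\tau(\ep)\le n^2\ln(2m/\ep)$, and the clean inequality $1/(e^{1/n^2}-1)\le n^2$ where the paper settles for an asymptotic equivalence.
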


We first show the following useful lemma\footnote{This can be viewed as a special case of Theorem \ref{thm:conv} in Appendix~\ref{sec:sup-basic}.}. Consider the Markov Chain $\sM(\nadap(\alp))$ as defined in \eqref{eqn:trans-mat-1} and let $\tau(\ep)$ be the mixing time of $\sM(\nadap(\alp))$ as defined in Section~\ref{sec:prelims}.
\vspace{0.1in}
\begin{lemma} \label{lem:mix}
$\tau(\ep) \le n^2 \ln(2m/\ep)$ when the arrival distribution of online requests is uniform such that $p_r=n^{-2}$ for all $r \in \cR$ and $c \le 2$.  
\end{lemma}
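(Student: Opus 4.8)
The plan is to prove the mixing-time bound by \emph{path coupling} (see \cite{MCBook}), exploiting the especially simple dynamics in the uniform case. First I would record that when $p_r=n^{-2}$ for all $r\in\cR$ we have $P_\alp(\x,\y)=n^{-2}$ for every $(u,u')$-neighbor pair, so one step of $\sM(\nadap(\alp))$ is equivalent to drawing an ordered pair $(u,u')\in\cU\times\cU$ uniformly at random (probability $n^{-2}$ each) and moving one driver from $u$ to $u'$ whenever $x_u\ge 1$ and $x_{u'}<c$, staying put otherwise. Since $P_\alp$ is symmetric the stationary distribution $\pi^*$ is uniform on $\Omega$; this is what will let the final bound depend on the diameter of $\Omega$ rather than on $\log|\Omega|$.

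Next I would metrize $\Omega$ by the token-move distance $d(\x,\y)=\tfrac12\sum_{u}|x_u-y_u|$, the least number of single-driver relocations turning $\x$ into $\y$, so that adjacent states (distance $1$) differ in exactly one driver: $y_a=x_a+1$, $y_b=x_b-1$, and $x_v=y_v$ otherwise. For two coupled copies $\X^{(t)},\Y^{(t)}$ I would use the identity coupling, feeding the same uniform pair $(u,u')$ to both copies at every step; this is a valid coupling since each marginal runs the correct chain. The heart of the argument is to show that from an adjacent pair one step satisfies $\E[d(\X^{(1)},\Y^{(1)})]\le 1-n^{-2}$.

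I would then run the case analysis over the drawn pair. Two facts drive it, both consequences of capacity. First, $d$ can never increase: adjacency forces $x_a=y_a-1\le c-1$ and $x_b=y_b+1\ge 1$, so the pair $(b,a)$ is always legal in $\X$ and $(a,b)$ always legal in $\Y$, and one checks pair-by-pair that the only configuration that could split the single discrepancy into two would require an occupancy exceeding $c$ and hence cannot arise. Second, coalescence: the legal move $(a,b)$ in $\Y$ produces $\Y^{(1)}=\x$ and the legal move $(b,a)$ in $\X$ produces $\X^{(1)}=\y$, so the copies merge as soon as the partner move is illegal, i.e. whenever $x_a\in\{0,c-1\}$ or $x_b\in\{1,c\}$. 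This is exactly where $c\le 2$ enters: it forces $x_a\in\{0,c-1\}$ always, so a coalescing pair of probability at least $n^{-2}$ always exists, giving the claimed contraction. (For $c\ge 3$ an intermediate value $0<x_a<c-1$ can occur and the identity coupling need not contract, which is why the hypothesis is required.)

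Finally I would invoke the path-coupling theorem to lift per-edge contraction to all pairs, obtaining $\E[d(\X^{(t)},\Y^{(t)})]\le(1-n^{-2})^t\,\mathrm{diam}(\Omega)\le m\,e^{-t/n^2}$, using $d(\x,\y)\le\tfrac12(m+m)=m\le 2m$. Taking $\Y^{(0)}\sim\pi^*$ and using $\|P^t(\x_0,\cdot)-\pi^*\|_{TV}\le\Pr[\X^{(t)}\ne\Y^{(t)}]\le\E[d(\X^{(t)},\Y^{(t)})]$ yields $d(t)\le 2m\,e^{-t/n^2}$; solving $2m\,e^{-t/n^2}\le\ep$ gives $\tau(\ep)\le n^2\ln(2m/\ep)$ (the bound is decreasing in $t$, matching the paper's definition of $\tau$). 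The main obstacle is the adjacent-state case analysis: verifying that no drawn pair can increase $d$, and isolating the single place where $c\le 2$ is indispensable for guaranteeing a coalescing move.
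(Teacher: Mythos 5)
Your proposal is correct and follows essentially the same route as the paper's proof: both run path coupling with the identity coupling (feeding the same uniformly drawn request to the two copies), verify on adjacent states that the distance never increases and that coalescence occurs with probability at least $n^{-2}$ (giving contraction $\beta = 1-n^{-2}$), and conclude $\tau(\ep) \le n^2\ln(2m/\ep)$ from the diameter bound $2m$. The differences are cosmetic---you use the half-scaled token metric and organize the case analysis around two structural facts (non-expansion plus a guaranteed coalescing draw), where the paper enumerates explicit tables for $c=2$---and your isolation of exactly where $c\le 2$ is indispensable is a nice touch the paper leaves implicit.
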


We defer the full proof of  Lemma \ref{lem:mix} to Appendix~\ref{sup:main}. Now we show the full proof of Theorem \ref{thm:error}.

\begin{proof}
Setting $\tau(\ep)=t$ implies that $\ep \geq 2m \exp(-t/n^2)$ in Lemma \ref{lem:mix}. Let $\ep = 2m \exp(-t/n^2)$. Recall that the definition of $\tau(\ep)=\min\{t: d(t') \le \ep,\forall t' \ge t\}$ where $d(t)=\max_{\x \in \Ome} ||P^t(\x, \cdot)-\pi^*||_{TV}$. Thus,  we have

$$\max_{\x \in \Ome} ||P^t(\x, \cdot)-\pi^*||_{TV} \le 2m \big(\exp(-1/n^2) \big)^t.$$

Thus, we have that \emph{when the arrival distribution of online requests is uniform such that $p_r=n^{-2}$ for all $r \in \cR$ and $c \le 2$, the scalars in Theorem~\ref{thm:conv} are explicit, namely, $C=2m$ and $\beta=\exp(-1/n^2)$}. By Lemma \ref{lem:thm-sim}, we have
	\[
		|\gam_{u,v}^{(t)}-\gam_{u,v}| \le 2C \beta^t =4m \exp(-t/n^2).
	\]
Recall that $W(\alp)= \lim_{t \rightarrow \infty} W(\nadap(\alp), t)$ and $\Delta_\alp(t)=| W(\nadap(\alp), t) - W(\alp) |$. Thus the RHS in Equation~\eqref{ineq:proof-thm1-b} can be upper-bounded by $\frac{4m\exp(-t/n^2)}{n^2} \sum_{r \in \cR} w_r$. Similarly, we have that 
	\[
			\widehat{\Delta}_\alp(T) \le \frac{1}{T}\sum_{t=1}^{T} \Delta_\alp (t) \le \frac{4m  \sum_{r \in \cR} w_r }{ T \cdot n^2 } \frac{1}{1-\exp(-1/n^2)} \sim   \frac{4m  \sum_{r \in \cR} w_r }{T}. \qedhere
	\]
\end{proof}

\subsection{A Lower Bound on Convergence Rates} 
We show that even in very special cases, the convergence rate shown in Theorem~\ref{thm:error} is \emph{almost} optimal. When $c=1$, we have the following lower bound where the dependence on $t$ and $T$ nearly matches  our upper bounds.

\begin{theorem}[Lower bound for $\Theta(\beta^t)$]
\label{thm:lb}
There is an instance with uniform arrival distribution where $p_r=n^{-2}$ for all $r \in \cR$ and $c=1$ such that Objective \eqref{eqn:OBJ-t} for $\nadap(1)$ has an asymptotic convergence rate of $\frac{2m\sum_{r \in \cR} w_r}{n^3\exp(t/n)}$.
\end{theorem}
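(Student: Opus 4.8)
The plan is to specialize the chain $\sM(\nadap(1))$ to $c=1$ and uniform arrivals $p_r=n^{-2}$, where it becomes a symmetric random walk on the $m$-subsets of $\cU$: each step draws an ordered pair $(u,u')$ uniformly among the $n^2$ pairs and, if $u$ is occupied and $u'$ empty, moves the single driver from $u$ to $u'$. By Lemma~\ref{lem:irre} the chain is irreducible and aperiodic, and since this transition matrix is symmetric its unique limiting distribution $\pi^*$ is uniform over $\Ome$. I then choose the instance whose profit vector puts all weight on one request type $r_0=(u_0,u_1)$ with $u_0\ne u_1$, so that $\sum_{r\in\cR}w_r=w_{r_0}$. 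Because $u_\sigma=u$ under $\nadap(1)$ and $c=1$, the objective in \eqref{eqn:OBJ-t} collapses to $W(\nadap(1),t)=n^{-2}\,w_{r_0}\,\gam_{u_0,u_1}^{(t)}$ with $\gam_{u_0,u_1}^{(t)}=\Pr[\ind{X}{t}{u_0}=1,\ind{X}{t}{u_1}=0]$; the arrival distribution (hence the chain) is unaffected by this weighting. Crucially I take the initial state $\x_0$ with $u_0$ \emph{empty} and $u_1$ \emph{occupied}, which is what forces the slow mode to dominate with the advertised constant.

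The key step is that $\gam_{u_0,u_1}^{(t)}$ obeys a closed, exactly solvable pair of scalar recurrences. Writing $a_u(t)=\Pr[\ind{X}{t}{u}=1]$ and $b_{u,v}(t)=\Pr[\ind{X}{t}{u}=1,\ind{X}{t}{v}=1]$, I would condition on the single uniform move to get $a_u(t+1)=\lambda\,a_u(t)+\tfrac{m}{n^2}$ and $b_{u,v}(t+1)=\mu\,b_{u,v}(t)+\tfrac{m-1}{n^2}\big(a_u(t)+a_v(t)\big)$, where $\lambda:=1-\tfrac1n$ and $\mu:=1-\tfrac{2(n-1)}{n^2}$. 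The first recurrence is self-contained and the second is driven only by the first, so $(a,b)$ closes; since $\gam_{u_0,u_1}^{(t)}=a_{u_0}(t)-b_{u_0,u_1}(t)$ exactly, I never need the full spectrum of $\sM(\nadap(1))$, only these two modes.

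Solving them gives $a_u(t)-\tfrac mn=(a_u(0)-\tfrac mn)\lambda^t$ and $b_{u,v}(t)-b^*=A\mu^t+B\lambda^t$ with $b^*=\tfrac{m(m-1)}{n(n-1)}$ and forced coefficient $B=\tfrac{(m-1)\,(a_u(0)+a_v(0)-2m/n)}{\,n-2\,}$, using $\lambda-\mu=\tfrac{n-2}{n^2}$. Subtracting yields $\gam_{u_0,u_1}^{(t)}-\gam_{u_0,u_1}=\big(a_{u_0}(0)-\tfrac mn-B\big)\lambda^t+O(\mu^t)$, and since $\lambda>\mu>0$ the $\lambda^t$ term dominates as $t\to\infty$. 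Plugging in $a_{u_0}(0)=0$, $a_{u_1}(0)=1$ and simplifying, the magnitude of the $\lambda^t$ coefficient equals $\tfrac{2m(n-m)-n}{n(n-2)}$; hence $\Delta(t)=|W(\nadap(1),t)-W(1)|\sim \tfrac{2m(n-m)-n}{n^3(n-2)}\,w_{r_0}\,\lambda^t$, which in the regime $m\ll cn=n$ assumed throughout is $\sim \tfrac{2m\,w_{r_0}}{n^3}\lambda^t=\tfrac{2m\sum_r w_r}{n^3}\lambda^t$. Finally, $\ln\lambda=-\tfrac1n+O(\tfrac1{n^2})$ lets me write $\lambda^t$ as $\exp(-t/n)$ to leading exponential order, recovering the stated rate $\tfrac{2m\sum_r w_r}{n^3\exp(t/n)}$.

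The hardest part is the pair-correlation recurrence: deriving the $b$-update requires a careful case analysis of the single move (which of $u,v$ loses or gains the driver, and the constraint that only one token moves per step), and then tracking how the slow $\lambda^t$ mode \emph{leaks} from $a$ into $b$ through the forcing term. The factor $2m$ is exactly where the lower bound departs from a naive first-moment estimate: with $u_0$ empty and $u_1$ occupied the term $a_{u_0}(t)-\tfrac mn$ and the term $-(b_{u_0,u_1}(t)-b^*)$ each contribute $\Theta(m/n)$ to the $\lambda^t$ coefficient and, by the sign of the chosen initial condition, \emph{add} rather than cancel, producing $2m$ in place of $m$. I would also record that the coefficient $2m(n-m)-n>0$ for $m\ge 1$ and large $n$, so that $\Delta(t)=\Theta(\lambda^t)$ is a genuine lower bound, and note that the faster mode $O(\mu^t)$ and the $(1-1/n)^t$ versus $\exp(-t/n)$ mismatch affect only lower-order terms.
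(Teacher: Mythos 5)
Your proposal is correct, and at its core it performs the same reduction as the paper: all profit is placed on a single request type whose origin is initially empty and whose destination is initially occupied, and the convergence rate is extracted from an exact analysis of the autonomous two-location occupancy process, with the same two decay modes $\lambda=1-\tfrac1n$ and $\mu=1-\tfrac{2(n-1)}{n^2}$. The execution differs: the paper lumps the chain into the explicit four-state chain $\sM(u^*,v^*)$ on $(\bs_1,\bs_2,\bs_3,\bs_4)$ and reads the answer off the matrix power $P^t_{(u^*,v^*)}(\bs_1,\bs_4)$, whereas you solve the closed, triangular pair of moment recurrences for $a_u(t)=\Pr[\ind{X}{t}{u}=1]$ and $b_{u,v}(t)=\Pr[\ind{X}{t}{u}=1,\ind{X}{t}{v}=1]$ (exact here because $c=1$ makes the number of occupied and empty sites deterministic) and then subtract. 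The two computations agree on the dominant term: your coefficient $\frac{2m(n-m)-n}{n(n-2)}$ is identical to the paper's $\frac{2mn-n-2m^2}{n(n-2)}$. Your bookkeeping is arguably cleaner, and it even corrects a sign error in the paper: your derivation yields $|\gam^{(t)}_{u_0,u_1}-\gam_{u_0,u_1}|=\frac{2m(n-m)-n}{n(n-2)}\lambda^t-\frac{(m-1)(n-m-1)}{(n-1)(n-2)}\mu^t$, while the paper writes the second term with a plus sign; a direct check at $n=4$, $m=2$, $t=1$ gives a true error of $13/48$, matching your formula, versus $23/48$ from the paper's. This does not affect the asymptotic claim, since $\lambda^t$ dominates. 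Two small points to tighten in your write-up: simplifying the leading coefficient to $\tfrac{2m}{n}$ requires $1\ll m$ in addition to $m\ll n$ (for $m=1$ the coefficient is $\tfrac1n$, not $\tfrac2n$), which is why the paper explicitly works in the regime $1\ll m\ll n$; and, as in the paper, the replacement $\lambda^t\sim\exp(-t/n)$ should be read as agreement of the exponential rate only up to a $1+O(1/n)$ factor in the exponent, since the ratio $(1-1/n)^t/\exp(-t/n)$ is not bounded away from $0$ as $t\to\infty$.
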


\begin{proof}
Consider the following instance. 
Let $p_r=1/n^2$ for all $r\in \cR$ with $|\cR|=n^2$ and $c=1$. Thus the state space $\Ome=\{\x=\{0,1\}^n: \sum_{u} x_u=m\}$. Suppose we index the $n$ locations of $\cU$ as $u=1,2,\ldots,n$ and we set the initial state $\x^{(0)}$ as $x^{(0)}_u=1$ for $1 \le u \le m$ and $x^{(0)}_u=0$ for $u>m$. In other words, we have that at $t=0$, the first $m$ locations has one driver each and the rest have no driver. Recall that from assumptions we have that $m\ll cn$ which simplifies to $m \ll n$. Set $w_r=1$ if $\bar{r}=(m+1,1)$ (with origin of $m+1$ and destination of $1$) and $w_r=0$ otherwise.  Note that by definition of $\nadap(\alp=1)$, we will check the availability of driver at location $u$ with probability $1$ when a request $r=(u,*)$ arrives. Since all request brings a profit of $0$ except $\bar{r}=(m+1,1)$, the only source of profits is when $\bar{r}=(m+1,1)$ arrives, the system has one drive at location $m+1$ but no drive at location $1$. Thus, the limiting performance in \eqref{eqn:OBJ-t} is $W(\nadap(1))=\frac{1}{n^2} \gam_{m+1,1}$, where $\gam_{m+1,1}$ refers to the probability that the system has one drive at location $m+1$ while no drive at location $1$ in the limiting distribution.

As before, from Lemma~\ref{lem:irre} we have that $\sM(1)$ admits a unique uniform distribution $\pi^*$ over $\Ome$. Thus
$$\gam_{m+1,1}=\sum_{\x: x_{m+1} \ge 1, x_1<1} \pi^*(\x)=\frac{{n-2 \choose m-1}}{{n \choose m}}=\frac{m}{n}\frac{n-m}{n-1}.$$

Consider the locations $m+1$ and $1$ and let $\gam_{m+1,1,t}=\Pr[X^{(t)}_{m+1} \ge 1, X^{(t)}_1<1]$, where $\X^{(t)}$ is the random state in Markov Chain $\sM(\nadap(\alp=1))$ as defined in \eqref{eqn:trans-mat-1} with initial state $\x^{(0)}$. Now we compute an explicit formula for $\gam_{m+1,1,t}$. 

Let $u^*=1$ and $v^*=m+1$. There are $4$ possible states for the number of drivers in $u$ and $v$: $\bs_1=(1,0), \bs_2=(1,1),\bs_3=(0,0)$ and $\bs_4=(0,1)$, where $\bs_1$ denotes that there is one driver at $u=1$ and no driver at $m+1$. We define a Markov chain $\sM(u^*,v^*)$ over these $4$ states as follows
$$P_{(u^*,v^*)}=\begin{pmatrix}
\frac{n^2-n+1}{n^2} &\frac{m-1}{n^2} &\frac{n-m-1}{n^2} &\frac{1}{n^2}\\
\frac{n-m}{n^2} &\frac{n^2-2(n-m)}{n^2} &0&\frac{n-m}{n^2}\\
\frac{m}{n^2} &0  &\frac{n^2-2m}{n^2} &\frac{m}{n^2}\\
\frac{1}{n^2} &\frac{m-1}{n^2} &\frac{n-m-1}{n^2} &\frac{n^2-n+1}{n^2}\\
\end{pmatrix}$$

Consider the value $P(\bs_1,\bs_2)$. From state $\bs_1$ to $\bs_2$, we have that this event occurs iff a request ends at $m+1$ but starts at some location with one driver. Since the system has $m$ drivers in total, we have that there are $m-1$ request types contributing to the transition from $\bs_1$ to $\bs_2$. Thus $P(\bs_1,\bs_2)=(m-1)/n^2$. Similarly we can get the other values.

Note that for $\sM(u^*, v^*)$, we start with state $\bs_1$ and therefore $\gam_{m+1,1,t}=P^t_{(u^*,v^*)}(\bs_1,\bs_4)$. We have that 

\begin{align}
|\gam_{m+1,1,t}-\gam_{m+1,1}|& =\Big(1-\frac{1}{n}\Big)^t \cdot \frac{2mn-n-2m^2}{n(n-2)}+
\Big(1-\frac{2}{n}+\frac{2}{n^2}\Big)^t \cdot \frac{(m-1)(n-m-1)}{(n-1)(n-2)}. \\
&  \ge \Big(1-\frac{1}{n}\Big)^t \cdot \Big(\frac{2m}{n}-\frac{1}{n}-\frac{2m^2}{n^2} \Big)+\Big(1-\frac{2}{n}+\frac{2}{n^2}\Big)^t  \cdot (\frac{m}{n}-\frac{1}{n})(1-\frac{m}{n}-\frac{1}{n}).
\end{align}

When $1 \ll m \ll n$ and $t \ge 10 n$ the above expression can be approximated by
$$|\gam_{m+1,1,t}-\gam_{m+1,1}| \sim \exp(-t/n)\frac{2m}{n}+\exp(-2t/n)\frac{m}{n} \sim  \exp(-t/n)\frac{2m}{n}.$$
 
Therefore we have,
$$\Delta_t=|W(\nadap(1))-W(\nadap(1),t)|=\frac{1}{n^2} |\gam_{m+1,1}-\gam_{m+1,1,t}| \sim \frac{2m \sum_{r\in \cR}w_r}{n^3 \exp(t/n)}. \qedhere$$
\end{proof}

\vspace{-0.4in}
\section{Experiments}
\label{Sec:Experiments}
\begin{figure}[!ht]%
	\centering
	 \subfloat{{\includegraphics[scale=0.3]{./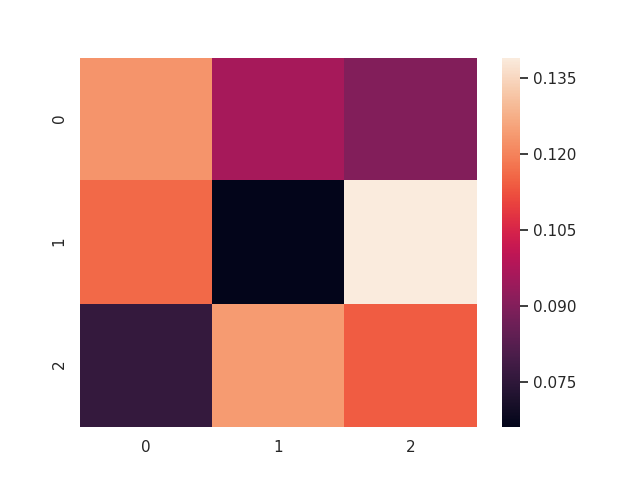} }}%
     \subfloat{ {\includegraphics[scale=0.3]{./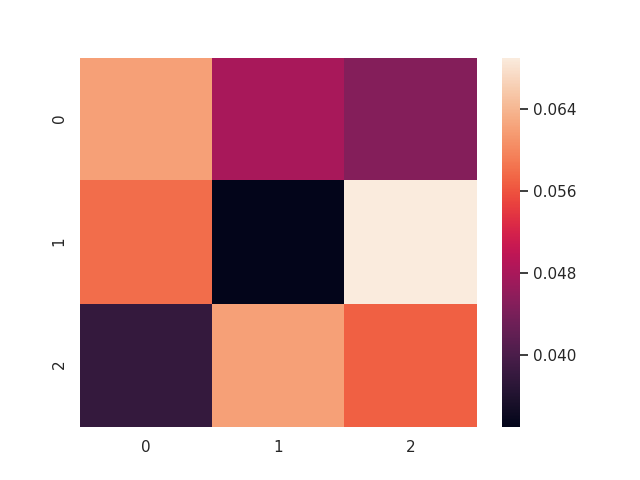} }}%
     \subfloat{ {\includegraphics[scale=0.3]{./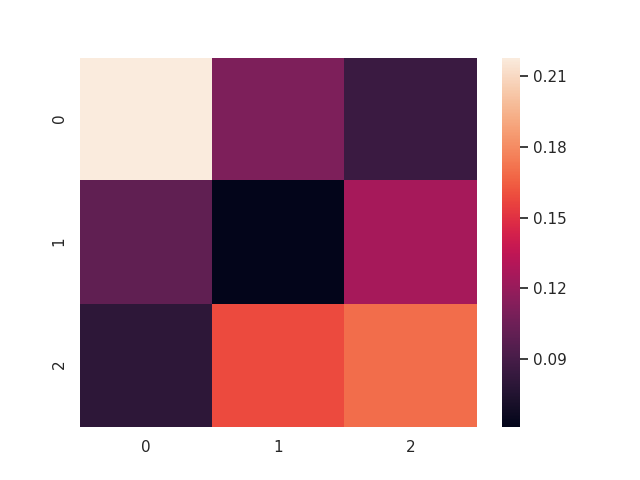} }}%
    \caption{3 $\times$ 3 grids; black to white represent low to high values. Plots 1, 2, 3 represent the distribution of requests with either starting or ending at a cell, requests starting at a cell and percentage of time spent by cars in a cell respectively by the optimal policy.}%
    \label{fig:heatmaps3}%
\end{figure}

In this section, we present experimental results on simulated synthetic data and a real-world dataset\footnote{\url{http://www.andresmh.com/nyctaxitrips/}}: the trip records for Jan 2013 of New York yellow cabs. We aim to experimentally study the convergence properties of objectives in Equations~\eqref{eqn:OBJ-t} and \eqref{eqn:OBJ} for various algorithms. We consider three kinds of algorithms; $\nadap(\alpha=0.8)$ and $\rand(\phi)$ described in the main section of the paper, where $\phi$ is chosen as the clock-wise permutation and the $\greedy$ algorithm whose action is defined as follows. When a request $r$ arrives at time $t$ with the system in state $\vec{X}^{(t)}$, we choose the neighbors such that the number of cars in state $\vec{X}^{(t)}$ are in decreasing order. Where appropriate, we also compare against a standard RL~\cite{sutton1998reinforcement} algorithm, namely, Deep-Q-Learning \cite{mnih_human-level_2015}, with agent and environment adapted from OpenAI Gym \cite{brockman_openai_2016} for 5 episodes of 2500 steps and experience replay.\footnote{\url{https://github.com/keon/deep-q-learning}} Fig.~\ref{fig:heatmaps3} represents exploratory analysis using optimal policy obtained via value iteration.\footnote{\url{https://www.ics.uci.edu/~dechter/publications/r42a-mdp_report.pdf}}

\xhdr{Preprocessing.} The records are pre-processed by restricting to trips whose starting and ending coordinates lie in the range of $(40.7014, 40.8024)$ for latitudes and $(-74.0041, -73.9552)$ for longitudes. We then split the records into $31$ parts with one part per date and further split them into Morning, Afternoon, Evening segments, corresponding to periods 7:00AM-11:00AM, 11:00AM-3:00PM, and 3:00PM-7:00PM, respectively. For brevity we only show results on the Morning segment; Appendix~\ref{sup:experiments} describes the other results. The Manhattan area is divided into a grid of $21 \times 11$. Using latitude-longitude coordinates, we map every record to a \emph{request} in this grid.%

\xhdr{Simulated experiments.} In the first experiment, we consider a grid of $5 \times 5$ cells and a total of $10$ cars with no accumulation limit on the number of cars in each location. We consider\footnote{We consider $T=10^4$ \emph{arrivals} within a 3-hour window. Additionally, the trips take multiple time-steps to complete.} $T=10^4$ and run $10^3$ independent experiments to compute the Obj.~\eqref{eqn:OBJ-t} and~\eqref{eqn:OBJ}. Each algorithm starts at the same \emph{adversarial} initial state (all $10$ cars are allocated in one single location) for every experiment. At each time step $t$, a request $r$ is sampled \emph{uniformly} at random. Figs.~\ref{fig:uniformSimulation1} and \ref{fig:uniformSimulation2} show the convergence of Obj.~\eqref{eqn:OBJ-t} and \eqref{eqn:OBJ} for the three algorithms. In particularly, we observe that the Obj.~\eqref{eqn:OBJ-t} and \eqref{eqn:OBJ} of $\nadap(0.8)$ both converge to the exact value as predicted in Theorem~\ref{thm:obj} (shown as the black line in the figure).

  \setlength{\belowcaptionskip}{-10pt}
\begin{figure*}[!h]
  \begin{minipage}{0.48\textwidth}
  	\centering
	\includegraphics[width=\linewidth]{./Experiments/Experiment4}
  \caption{Obj.~\eqref{eqn:OBJ-t} on simulation with Uniform Arrivals}
  \label{fig:uniformSimulation1}
  \end{minipage}\hfill
  \begin{minipage}{0.48\textwidth}
  \includegraphics[width=\linewidth]{./Experiments/Experiment5}
  \caption{Obj.~\eqref{eqn:OBJ} on simulation with Uniform Arrivals}
  \label{fig:uniformSimulation2}
  \end{minipage}\hfill
  \end{figure*}
  
  \begin{figure*}[!h]
  \begin{minipage}{0.48\textwidth}%
  \includegraphics[width=\linewidth]{./Experiments/Experiment6_Morning}
  \caption{Obj.~\eqref{eqn:OBJ-t}, synthetic data, parameters learnt from real data}\label{fig:realSimulation1}
  \end{minipage}\hfill
  \begin{minipage}{0.48\textwidth}%
  \includegraphics[width=\linewidth]{./Experiments/Experiment7_Morning}
  \caption{Obj.~\eqref{eqn:OBJ}, synthetic data, parameters learnt from real-data}\label{fig:realSimulation2}
  \end{minipage}
\end{figure*}

\xhdr{Synthetic experiments using learnt parameters.} Here, we ``learn'' the parameters from the processed taxi dataset, but \emph{simulate} the requests from these learnt parameters. In particular, we first learn the arrival rates of each of the requests using the data from Jan 2013 using a standard unbiased estimator. We then set the \emph{weight} for each request as the distance between the starting and ending location. At each time step $t$, a request $r$ is drawn from this learnt distribution. The grid size and the number of cars are identical to the real-world experiments (described below). Figs.~\ref{fig:realSimulation1} and \ref{fig:realSimulation2} show the convergence of  Obj.~\eqref{eqn:OBJ-t} and \eqref{eqn:OBJ} while Figs.~\ref{fig:error1} and \ref{fig:error2} show the corresponding absolute error bound. In Fig.~\ref{fig:error1} we fit a statistical best-fit exponential curve using least squares and in Fig.~\ref{fig:error2} we fit a statistical best-fit linear function. This corroborates our Main Theorem~\ref{thm:mainTheorem} as we see the same asymptotic behavior. Note that the parameters in Figs.~\ref{fig:realDataset} and \ref{fig:realSimulation2} are the same except the arrival assumption, thus the sharp difference does shed light on the importance of arrival assumption for convergence.

\begin{figure*}[!h]
  \begin{minipage}[t]{0.48\textwidth}
  \includegraphics[width=\linewidth]{./Experiments/Experiment3_Morning}
	\caption{Obj.~\eqref{eqn:OBJ} on real dataset. $x$-axis in log-scale}
  \label{fig:realDataset}
  \end{minipage}\hfill
  \begin{minipage}[t]{0.48\textwidth}
  \includegraphics[width=\linewidth]{./Experiments/Error/Experiment6_error_Morning}
  \caption{Error bound of Obj.~\eqref{eqn:OBJ-t} on synthetic data}
  \label{fig:error1}
  \end{minipage}\hfill
  \end{figure*}

  \begin{figure*}[!h]
  \begin{minipage}[t]{0.48\textwidth}%
  \includegraphics[width=\linewidth]{./Experiments/Error/Experiment7_error_Morning}
  \caption{Error bound of Obj.~\eqref{eqn:OBJ} on synthetic data}
  \label{fig:error2}
  \end{minipage}\hfill
  \begin{minipage}[t]{0.48\textwidth}%
  \includegraphics[width=\linewidth]{./Experiments/Error/Experiment3_error_Morning}
  \caption{Error bound of Obj.~\eqref{eqn:OBJ} on real dataset}
  \label{fig:error3}
  \end{minipage}
\end{figure*}

\xhdr{Real-world experiments.} First, a random subset of $5000$ cars from the dataset is chosen and records corresponding to trips of these cars are sub-sampled. We use the real-time per-second arrival sequence of this sub-sampled dataset with $5000$ cars and a maximum congestion capacity of $50$ cars per cell as the experimental parameters. We study the convergence properties of Obj.~\eqref{eqn:OBJ} on this real-world dataset, for each period Morning, Afternoon and Evening. We use $T=10^4$ and use the $31$ days as ``independent'' runs to average our results over. Figs.~\ref{fig:realDataset} and \ref{fig:error3} show the evolution of Obj.~\eqref{eqn:OBJ} for the Morning period with x-axis having log-scale spacing and the corresponding error bound respectively. Fig.~\ref{fig:error3} uses the average of all $T$ rounds as the target convergence value.

\xhdr{Further discussion.} Our simulations show that even when starting in a pathologically bad state, all algorithms show the ``correct'' convergence behavior on both objectives (Figs.~\ref{fig:uniformSimulation1}, \ref{fig:uniformSimulation2}, \ref{fig:realSimulation1} and \ref{fig:realSimulation2}). In Fig.~\ref{fig:realSimulation1}, our state-dependent policy does almost as well as the standard RL algorithm while having \emph{good} convergence properties. DQN, on the other hand, wildly oscillates even after running for a large number of time-steps, thus supporting our main message on the importance of convergence. Additionally, algorithms proposed in this paper are \emph{simple} and extremely fast compared to DQN. On the real dataset, there are huge variations in the beginning steps and eventually the values do not vary a lot. However, they still do not fully converge like in the IID setting  (Figs.~\ref{fig:realDataset} and \ref{fig:error3}). This is mainly due to the fact that in practice the distribution is not \emph{truly} identical over all time steps and that full convergence is unlikely (\eg we provided such an example of non-i.i.d. arrivals in preliminaries). Nonetheless, the theory proved with the appropriate assumptions gives good insight into the performance in practice. In supplementary \ref{sup:experiments}, we show results for the Afternoon and Evening periods; the qualitative behaviors are the same. The fluctuations in the beginning also justify the importance of reaching a stationary distribution since platforms are \emph{risk-averse} and would prefer schemes with guaranteed throughput.

\vspace{-4mm}
\section{Conclusions \& Future Research}\label{sec:conclusion}
\vspace{-2mm}	
In this paper, we considered the rideshare dispatching problem and the corresponding Markov chain for the evolution of the matching process.  In turn, we characterized the driver distribution over time. In particular, we showed that the mixing time of this Markov chain is small, and gave explicit bounds on the convergence rate of various algorithms. Under practical assumptions, our theory shows that the convergence rates are fast, and in practice one would reach the stationary distribution in a short amount of time. To complement the theory we ran extensive experiments on both simulated and real datasets. Our theory gives accurate insight into rideshare dynamics in practice, and also complements a growing body of related research (e.g.,~\cite{ozkan2017dynamic,banerjee2017pricing,sid18}) that relies on assumptions that were, until now, underexplored and poorly understood. Still, relaxing any of the five commonly-made assumptions discussed in the preliminaries (\S\ref{sec:prelims}) would be practically useful.

\paragraph{Acknowledgments.} 
John Dickerson and Michael Curry were supported in part by NSF CAREER Award IIS-1846237 and DARPA SI3-CMD Award S4761. Aravind Srinivasan, Karthik Abinav Sankararaman and Pan Xu were supported in part by NSF CNS-1010789, CCF-1422569 and CCF-1749864, and by research awards from Adobe, Amazon and Google. Yuhao Wan was supported via an REU grant, NSF CCF-1852352, and was advised by John Dickerson.  John Dickerson and Aravind Srinivasan were both supported by a gift from Google and a seed grant from the Maryland Transportation Institute (MTI). Work done (and funding sources) when Karthik was at the University of Maryland, College Park.

\newpage	

\onecolumn

\appendix
\begin{center}{\LARGE\bf Appendix}\end{center}

\section{Primers on Markov Chains}
\label{sec:sup-basic}
In this section we provide a brief background on important concepts and results from Markov chains. For more details, please refer to these excellent monographs~(\cite{MCBook,guruswami2016rapidly}). Throughout, we denote a Markov Chain by $\sM$ with state space $\Omega$ and transition matrix $P$. Additionally, we assume that the number of states in $\sM$ is finite (\ie $|\Ome|< \infty$).

\xhdr{Irreducible Markov Chains.} $\sM$ is said irreducible if for any two states $\x, \y \in \Omega$, there exists an integer $t$ such that $P^t(\x,\y)>0$. In other words, it is possible to start at any state and reach any other state with a positive probability after finite number of transitions. 

\xhdr{Aperiodic Markov Chains.} For any state $\x$, let $\cT(\x) \doteq \{t \ge 1, P^t(\x,\x)>0\}$. The period of state $\x$ is defined to be the greatest common divisor of the entries in the set $\cT(x)$. A Markov Chain $\sM$ is said to be aperiodic if all states have period $1$.

Lemma 1.6 in \cite{MCBook} shows that all states in an irreducible chain have the same period. Thus to prove an irreducible chain $\sM$ is aperiodic, it suffices to show that there exists a state whose period is $1$.

For two distributions $\mu$ and $\nu$ on the state space $\Ome$, the \emph{total variation distance} between $\mu$ and $\nu$ is defined as $||\mu-\nu||_{TV}=\frac{1}{2} \sum_{\x \in \Ome} |\mu(\x)-\nu(\x)|$. 
\vspace{2mm}
\begin{definition}[Stationary Distribution]
	A distribution $\pi^*$ is said to be a \emph{stationary distribution} if $\pi^*=\pi^* P$. 
\end{definition}

\vspace{2mm}
\begin{definition}[Limiting Distribution]
	A distribution $\pi^*$ is said to be a \emph{limiting distribution} if for every initial state $\x$, we have $\lim_{t \rightarrow \infty}||P^{t}(\x, \cdot)-\pi^*||_{TV}=0$. Here $P^t(\x, \cdot)$ denotes that distribution over when after starting at $\x$ and running the process for $t$ steps. 
\end{definition}

\xhdr{Stationary and Limiting Distributions of Irreducible and Aperiodic Markov Chains.}
Any (finite-state) irreducible and aperiodic Markov Chain admits a unique stationary distribution. Additionally this is also the unique limiting distribution. More formally, we can prove the following convergence theorem.

\begin{theorem}\label{thm:conv}(Convergence Theorem 4.9 in \cite{MCBook}) 
Suppose a (finite-state) Markov Chain $\sM$ with transition matrix $P$ is irreducible and aperiodic with stationary distribution $\pi^*$. 
Then there exists a constant $C>0$ and $\beta \in (0,1)$ such that 
\[
		\max_{\x \in \Ome}||  P^t(\x, \cdot)-\pi^*||_{TV} \le C \beta^t.
\]
\end{theorem}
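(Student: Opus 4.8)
The plan is to prove this via the classical \emph{Doeblin minorization} argument, which converts irreducibility and aperiodicity into a uniform contraction in total variation distance. The first and most technical step is to establish \emph{primitivity}: that there is a single integer $r$ with $P^r(\x,\y)>0$ for \emph{all} pairs $\x,\y \in \Ome$ simultaneously. Aperiodicity says each $\cT(\x)=\{t\ge 1 : P^t(\x,\x)>0\}$ has greatest common divisor $1$; since $\cT(\x)$ is closed under addition, a standard numerical-semigroup fact yields a threshold $t_\x$ with $P^t(\x,\x)>0$ for every $t\ge t_\x$. Irreducibility supplies, for each ordered pair, some $m_{\x,\y}$ with $P^{m_{\x,\y}}(\x,\y)>0$, and then $P^t(\x,\y)\ge P^{t-m_{\x,\y}}(\x,\x)\,P^{m_{\x,\y}}(\x,\y)>0$ once $t\ge t_\x+m_{\x,\y}$. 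Taking $r$ to be the maximum of $t_\x+m_{\x,\y}$ over the finitely many pairs makes $P^r$ strictly positive entrywise. I expect this primitivity step to be the main obstacle, since it is the only genuinely number-theoretic ingredient and is where both hypotheses are jointly used.

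Next I would exploit that $\pi^*$ has full support (also a consequence of irreducibility) together with strict positivity of $P^r$ to choose $\delta\in(0,1)$ with $P^r(\x,\y)\ge \delta\,\pi^*(\y)$ for all $\x,\y$. This gives the minorization decomposition $P^r=\delta\,\Pi+(1-\delta)\,Q$, where $\Pi$ is the rank-one matrix with every row equal to $\pi^*$ and $Q:=(P^r-\delta\,\Pi)/(1-\delta)$ has nonnegative entries and rows summing to $1$, hence is stochastic. Stationarity $\pi^*P=\pi^*$ then yields the identities $\Pi P=P\Pi=\Pi$, so $\Pi\Pi=\Pi$, and feeding the decomposition back gives $\Pi Q=Q\Pi=\Pi$.

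Using these identities I would prove by induction on $k$ that $P^{rk}=\big(1-(1-\delta)^k\big)\Pi+(1-\delta)^k Q^k$; the inductive step is a short matrix multiplication in which every cross term collapses via $\Pi\Pi=\Pi$, $\Pi Q=\Pi$, and $Q^k\Pi=\Pi$. Subtracting $\Pi$ gives $P^{rk}-\Pi=(1-\delta)^k\,(Q^k-\Pi)$. Since each row of $Q^k$ and of $\Pi$ is a probability distribution and total variation distance between two distributions is at most $1$, this yields $\|P^{rk}(\x,\cdot)-\pi^*\|_{TV}\le(1-\delta)^k$ for every state $\x$.

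Finally I would pass from multiples of $r$ to arbitrary $t$. Writing $t=rk+s$ with $0\le s<r$ and using that applying the stochastic matrix $P$ never increases total variation distance (so $d(t)$ is non-increasing in $t$), I get $d(t)\le d(rk)\le(1-\delta)^{\lfloor t/r\rfloor}\le (1-\delta)^{-1}\big((1-\delta)^{1/r}\big)^t$. Setting $\beta=(1-\delta)^{1/r}\in(0,1)$ and $C=(1-\delta)^{-1}>0$ then gives $\max_{\x\in\Ome}\|P^t(\x,\cdot)-\pi^*\|_{TV}\le C\beta^t$, completing the proof with explicit constants expressed through $\delta$ and $r$.
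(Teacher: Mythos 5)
Your proof is correct and complete: the primitivity step (combining the numerical-semigroup argument for aperiodicity with irreducibility to get $P^r>0$ entrywise), the Doeblin minorization $P^r(\cdot,\cdot)\ge \delta\,\pi^*(\cdot)$ with the induction $P^{rk}=\bigl(1-(1-\delta)^k\bigr)\Pi+(1-\delta)^k Q^k$, and the passage to general $t$ via monotonicity of $d(t)$ all go through with the stated constants $C=(1-\delta)^{-1}$ and $\beta=(1-\delta)^{1/r}$. The paper does not prove this statement itself but cites it as Theorem 4.9 of \cite{MCBook}, and your argument is essentially the standard minorization proof given in that reference, so the approaches coincide.
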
 

Theorem~\ref{thm:conv} implies that the total variation distance between the distribution after $t$ steps and the unique stationary distribution exponentially decreases, irrespective of the initial state. \emph{Note that the terms $C$ and $\beta$ may depend on the parameters of the Markov Chain (\eg the size of $|\Ome|$) and is independent of $t$.}

\xhdr{When is the Distribution $\pi^*$ Uniform?} For any irreducible and aperiodic Markov Chain $\sM$, its unique stationary distribution $\pi^*$ is uniform over $\Ome$ iff in the transition matrix $P$, the sum of every column is equal to $1$. In particular, any symmetric matrix $P$ has all columns sum to $1$, since the sum of every row in $P$ is equal to $1$.

\xhdr{Rapid Mixing via coupling arguments.} Coupling and path coupling are common techniques used to prove that a Markov Chain is rapidly mixing. 
A (causal) coupling of the Markov Chain $\sM$ is a joint process $(\sX,\sY)=(\X^{(t)}, \Y^{(t)})$ on $\Ome \times \Ome$, such that each of the process $\sX$ and $\sY$ is a faithful copy of $\sM$, if considered marginally. In other words, we require that, for all possible $\x, \x', \y, \y' \in \Ome$, we have that 
$$\Pr[\ind{\X}{t+1}{}=\x'| \ind{\X}{t}{}=\x]=P(\x,\x'), ~~\Pr[\ind{\Y}{t+1}{}=\y'| \ind{\Y}{t}{}=\y]=P(\y,\y').$$

\begin{lemma}[Path Coupling Lemma 6.3 in \cite{guruswami2016rapidly}] \label{lem:path-coupling-a}
 Let $\bd$ be an integer-valued metric on $\Omega \times \Omega$ which has range over $\{0,1,\ldots, D\}$. Let $\cS$ be a subset of $\Omega \times \Omega$ such that for each $(\x,\y) \in \Omega \times \Omega$, there exists a path $\x=\z_0,\z_1,\ldots,\z_t=\y$ connecting $\x$ and $\y$ where $(\z_\ell,\z_{\ell+1}) \in \cS$ for all $0 \le \ell <t$ and $\bd(\x, \y)=\sum_{\ell=0}^{t-1} \bd(\z_\ell,\z_{\ell+1})$.  In other words, we just need to specify a subgraph $\cH$ with vertex set $\Omega$ and edge set $\cS$ and define a metric over $\cS$. Then $\bd(\x,\y)$ is simply the shortest path between $\x$ and $\y$ over $\cH$. Suppose we design a coupling $(\X,\Y) \rightarrow (\X',\Y')$ of the original Markov Chain $\sM$, which is defined over all pairs $(\X,\Y) \in \cS$ such that there exists a $\beta \in [0,1)$ such that $\E[\bd(\X',\Y')] \le \beta \bd(\X,\Y)$. Then the mixing time $\tau(\ep)$ of $\sM$ satisfies $\tau(\ep) \le \frac{\ln (D/\ep)}{1-\beta}$. 
\end{lemma}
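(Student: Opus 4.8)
The plan is to prove the lemma by the two standard moves behind path coupling: first upgrade the per-edge contraction hypothesis to a global contraction valid on all of $\Omega\times\Omega$, and then feed that global contraction into the coupling inequality to bound the mixing time $\tau(\ep)$.

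First I would carry out the extension step. The hypothesis only supplies a one-step coupling $(\X,\Y)\to(\X',\Y')$ on adjacent pairs $(\X,\Y)\in\cS$ satisfying $\E[\bd(\X',\Y')]\le\beta\,\bd(\X,\Y)$. To handle an arbitrary pair $(\x,\y)$, I would fix a geodesic $\x=\z_0,\z_1,\ldots,\z_t=\y$ with each $(\z_\ell,\z_{\ell+1})\in\cS$ and $\bd(\x,\y)=\sum_{\ell}\bd(\z_\ell,\z_{\ell+1})$, which exists because $\bd$ is the shortest-path metric on $\cH$. I would then glue the edge couplings along this path: sample the next-step images $\vec{Z}_0',\ldots,\vec{Z}_t'$ of the geodesic vertices as one sequence whose every consecutive pair $(\vec{Z}_\ell',\vec{Z}_{\ell+1}')$ carries exactly the joint law of the edge coupling of $(\z_\ell,\z_{\ell+1})$. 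Setting $\X'=\vec{Z}_0'$ and $\Y'=\vec{Z}_t'$ produces a valid coupling of $P(\x,\cdot)$ with $P(\y,\cdot)$. Then, by the triangle inequality for $\bd$, followed by linearity of expectation and the per-edge hypothesis,
\[
\E[\bd(\X',\Y')]\;\le\;\sum_{\ell=0}^{t-1}\E[\bd(\vec{Z}_\ell',\vec{Z}_{\ell+1}')]\;\le\;\beta\sum_{\ell=0}^{t-1}\bd(\z_\ell,\z_{\ell+1})\;=\;\beta\,\bd(\x,\y),
\]
so the contraction now holds for every pair in $\Omega\times\Omega$.

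Next I would iterate. Running the extended coupling from $(\x,\y)$ and applying the global contraction at each step (via the tower property) gives $\E[\bd(\X^{(t)},\Y^{(t)})]\le\beta^t\,\bd(\x,\y)\le\beta^t D$, where $D=\max\bd$. Since $\bd$ is integer-valued with $\bd(\x,\y)\ge 1$ whenever $\x\neq\y$, Markov's inequality yields $\Pr[\X^{(t)}\neq\Y^{(t)}]=\Pr[\bd(\X^{(t)},\Y^{(t)})\ge 1]\le\E[\bd(\X^{(t)},\Y^{(t)})]\le D\beta^t$. Coupling a chain started at an arbitrary $\x$ against a second chain started from $\pi^*$ (so its time-$t$ law is $\pi^*$ by stationarity), the coupling inequality gives $||P^t(\x,\cdot)-\pi^*||_{TV}\le\Pr[\X^{(t)}\neq\Y^{(t)}]\le D\beta^t$, hence $d(t)\le D\beta^t$. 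To convert this into the stated mixing-time bound I would require $D\beta^t\le\ep$ and use the elementary estimate $\ln(1/\beta)\ge 1-\beta$ to conclude that $t\ge \ln(D/\ep)/(1-\beta)$ already forces $d(t)\le\ep$, giving $\tau(\ep)\le \ln(D/\ep)/(1-\beta)$.

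The main obstacle is the gluing step: constructing a single joint law over $(\vec{Z}_0',\ldots,\vec{Z}_t')$ whose every consecutive pair reproduces the prescribed edge coupling. The point to get right is that each edge coupling is \emph{faithful}, so the marginal law of $\vec{Z}_\ell'$ equals $P(\z_\ell,\cdot)$ no matter which of its two incident edge couplings it is read from; this consistency is exactly what lets me build the sequence one coordinate at a time through conditional distributions (a gluing/composition argument). Once that joint law is in hand, everything downstream — the triangle inequality, the geometric decay, Markov's inequality, the coupling inequality, and the logarithm estimate — is routine.
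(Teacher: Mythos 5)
Your proof is correct and is precisely the standard Bubley--Dyer path-coupling argument: extend the edge coupling along a geodesic by sequentially gluing the faithful edge couplings, deduce global contraction $\E[\bd(\X',\Y')]\le\beta\,\bd(\x,\y)$, iterate via the tower property, apply Markov's inequality and the coupling inequality against a stationary copy to get $d(t)\le D\beta^t$, and finish with $\ln(1/\beta)\ge 1-\beta$. Note that the paper offers no proof of this lemma at all---it is imported by citation as Lemma 6.3 of \cite{guruswami2016rapidly}---so your write-up simply (and correctly) supplies the omitted standard proof from the cited source, including the one genuinely delicate point, namely that faithfulness of each edge coupling makes the marginals consistent and thus permits the conditional-distribution gluing along the path.
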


\section{Missing Proofs in Main Sections} \label{sup:main}

\xhdr{Proof of Lemma~\ref{lem:irre}.}
		\begin{lemma}(Restated Lemma~\ref{lem:irre}).
			Under the hot-spot assumption, Markov chains $\sM(\alp)$  and $\sM(\phi)$ defined in \eqref{eqn:trans-mat-1} and \eqref{eqn:trans-mat-2} are both irreducible and aperiodic for any given $\alp>0$ and $\phi$.
		\end{lemma}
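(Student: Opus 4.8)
The plan is to treat both chains uniformly, exploiting that a transition from $\x$ to $\y$ carries positive probability \emph{whenever} $(\x,\y)$ is a $(u,u')$-neighbor with $p_{(u,u')}>0$: for $\sM(\nadap(\alp))$ this holds because $q(u,u')\ge \alp\, p_{(u,u')}>0$ when $\alp>0$, and for $\sM(\rand(\phi))$ because $P_\phi(\x,\y)\ge p_{(u,u')}>0$. The hot-spot assumption then supplies a \emph{universal relay}: for every $w\neq u^*$ we have $p_{(u^*,w)}>0$ and $p_{(w,u^*)}>0$, so a single driver can be moved $u^*\to w$ or $w\to u^*$ with positive probability, provided the source holds a driver and the target has spare capacity. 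These moves are available in \emph{both} chains, which lets me run one argument for both.

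For irreducibility I would fix arbitrary $\x,\y\in\Ome$ and drive the discrepancy $\Phi(\x,\y)\doteq\sum_{u}|x_u-y_u|$ to zero one driver at a time. Since $\sum_u x_u=\sum_u y_u=m$, whenever $\x\neq\y$ there is an \emph{excess} node $a$ with $x_a>y_a$ and a \emph{deficit} node $b$ with $x_b<y_b$, and transporting one driver from $a$ to $b$ lowers $\Phi$ by exactly $2$. If $a=u^*$ I move $u^*\to b$ directly (there is a driver at $u^*$, and $x_b<y_b\le c$ gives room at $b$); symmetrically if $b=u^*$. If $a,b\neq u^*$ I relay through the hot-spot, and the only subtlety is that $u^*$ may already be full. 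I resolve this by ordering the two relay moves: if $x_{u^*}<c$ I perform $a\to u^*$ then $u^*\to b$, whereas if $x_{u^*}=c$ I first perform $u^*\to b$ (legal, since $b$ has room) and then $a\to u^*$ (now legal, since $u^*$ has freed a slot). In either case the net effect is one driver carried from $a$ to $b$, every intermediate configuration lies in $\Ome$, and every step has positive probability. Iterating at most $m$ such corrections yields a positive-probability path from $\x$ to $\y$, so $P^t(\x,\y)>0$ for some finite $t$ and the chain is irreducible.

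For aperiodicity I would invoke Lemma~1.6 of \cite{MCBook} (all states of an irreducible chain share one period), so that it suffices to exhibit a single state $\x$ with $P(\x,\x)>0$, i.e.\ a state at which the sampled request can fail to move any driver. When $m\ge c$ I take a state with $u^*$ \emph{full} ($x_{u^*}=c$, feasible since $m\le cn$): any request of type $(w,u^*)$, which has $p_{(w,u^*)}>0$, is unserviceable for \emph{both} policies because its destination has no spare capacity, giving $P(\x,\x)\ge p_{(w,u^*)}>0$. In the low-supply regime $m<c$ (where no location is ever full) I instead empty the hot-spot: for $\nadap(\alp)$ a request $(u^*,w)$ fails on the probability-$\alp$ event that the origin $u^*$ is scanned, giving $P(\x,\x)\ge \alp\, p_{(u^*,w)}>0$; for $\rand(\phi)$, whose fallback scans $\cN(u^*)$, I additionally empty the whole closed neighborhood $\{u^*\}\cup\cN(u^*)$ (feasible since then $m<c\le c(n-5)$ for the grid sizes considered), so that $(u^*,w)$ finds no driver at any scanned location and $P(\x,\x)\ge p_{(u^*,w)}>0$. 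In each case $\x$ has period $1$, so the chain is aperiodic; combined with irreducibility, Theorem~\ref{thm:conv} then gives the unique stationary distribution coinciding with the unique limiting distribution $\pi^*$, as claimed. The hard part throughout is the capacity bookkeeping — keeping the relay through $u^*$ within the bound $c$ during irreducibility (handled by reordering the two moves) and, for $\rand(\phi)$, certifying that the witnessing request genuinely fails despite the policy's neighbor-scanning fallback (handled by saturating the destination or emptying the entire closed neighborhood of $u^*$).
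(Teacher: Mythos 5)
Your proposal is correct, and its skeleton is the same as the paper's: irreducibility via relaying single-driver moves through the hot-spot $u^*$, including the identical trick of reordering the two relay moves when $u^*$ is at capacity, and aperiodicity via exhibiting a state with $P(\x,\x)>0$ and invoking Lemma~1.6 of \cite{MCBook}, with the same case split $m\ge c$ versus $m<c$. The differences are refinements rather than a new route, but two are worth recording, both in your favor. For irreducibility, the paper only argues the case where $(\x,\y)$ is itself a $(u,u')$-neighbor (calling this ``w.l.o.g.'') together with the diagonal case $\x=\y$, leaving implicit the decomposition of an arbitrary pair of states into capacity-respecting single-driver transports; your potential $\Phi(\x,\y)=\sum_u|x_u-y_u|$ with excess/deficit nodes supplies exactly that missing decomposition, and your observation that a transport from an excess node $a$ to a deficit node $b$ never violates capacities is what makes the ``w.l.o.g.'' legitimate. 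For aperiodicity in the regime $m<c$, the paper's witness packs all $m$ drivers into $u^*$ and lower-bounds the self-loop by $\sum_{u\ne u^*}\alp\,p_{(u,u^*)}$, an argument whose $\alp$ factor is specific to $\nadap(\alp)$ and whose transfer to $\rand(\phi)$ is only asserted, not shown; your witness instead empties $u^*$ (and, for $\rand(\phi)$, all of $\{u^*\}\cup\cN(u^*)$) and uses requests with origin $u^*$, which handles the neighbor-scanning fallback of $\rand(\phi)$ explicitly, at the mild cost of the feasibility condition $m\le c(n-5)$, which the standing assumption $m\ll cn$ guarantees. Both witnesses are valid, but yours is the more careful one for $\rand(\phi)$.
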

		
		It is well-known (see Theorem~\ref{thm:conv} in the primer) that a Markov chain that is both irreducible and aperiodic admits a unique limiting distribution $\pi^*$, which coincides with the unique stationary distribution.

		For this proof we refer to the transition matrix $P_\alp$ of $\sM(\alp)$ defined in \eqref{eqn:trans-mat-1} as $P$ for notation simplicity (\ie we drop the subscript $P_{\alp}$). 

\begin{proof}
\xhdr{$\sM(\alp)$ is irreducible.} From the definition of $P(\bf{\cdot})$ in \eqref{eqn:trans-mat-1}, we have that for any $(u,u')$-neighbor $(\x,\y)$, the inequality $P(\x,\y) \ge \alp p_{(u,u')}$ holds\footnote{A similar corresponding property holds for $\sM(\phi)$.}. Thus from the condition stated in Lemma \ref{lem:irre}, we get that $P(\x,\y)>0$ for any $(u,u')$-neighbor $(\x,\y)$ with either $u=u^*$ or $u'=u^*$.

In what follows, we assume that $u \neq u^*$ and $u' \neq u^*$.
Consider any two given states $\x, \y$ in $\Ome$. We show that there exists a positive integer $t > 0$ such that $P^t(\x,\y)>0$. This can be proved using the following cases. 

\begin{itemize}
\item \textbf{Case $\x \neq \y$.} W.l.o.g. assume that $(\x,\y)$ is a $(u,u')$-neighbor such that  $u \neq u^*$ and $u' \neq u^*$. Let $x_{u}=a, x_{u'}=b, x_{u^*}=d$. From the definition of
$(u,u')$-neighbor, we have that $y_u=a-1, y_{u'}=b+1$ and $y_{u^*}=d$.  The values on the ordered pairs of locations ($u$, $u'$, $u^*$) can be enumerated for various values of $d$.
\begin{enumerate} 

\item First case is when $d<c$, where $c$ is the capacity of the number of drivers in each location. Observe that there exists a feasible path from $\x$ and $\y$, namely,
	\[
		\x=(a, b,d) \rightarrow \x'=(a-1,b, d+1) \rightarrow \y=(a-1, b+1,d).
	\]
 Therefore we have that $P^2(\x,\y) >P(\x, \x') \cdot P(\x', \y)>0$ since $(\x,\x')$ is a $(u,u^*)$-neighbor while $(\x',\y)$ is a $(u^*,u')$-neighbor. 
 
 \item The other case is when $d=c$. Then there exists another feasible path connecting $\x$ and $\y$, namely,
\[
	\x=(a, b,c) \rightarrow \x'=(a,b+1, c-1) \rightarrow \y=(a-1, b+1,c).
\]
 Once again we have that $P^2(\x,\y)>0$.
 \end{enumerate}
 	Note that in both these cases we critically used the \emph{hot-spot} assumption.

\item \textbf{Case $\x=\y$.} W.l.o.g. assume that $x_u=a$ with $a<c$ for some $u \neq u^*$ and $x_{u^*}=b$. In this case we enumerate the various values for the ordered pair $(u, u^*)$.

\begin{enumerate}
\item ($b=0$). 
	We consider a few different scenarios in this case. The easiest case is when $a \neq 0$. In this case we have the following non-zero path from $\x$ to $\x$ namely,
		\[
			\x=(a, b) \rightarrow \x'=(a-1,b+1) \rightarrow \x=(a, b).
		\]
	
	Using the \emph{hot-spot} assumption, we have that  $P^2(\x,\x)>P(\x,\x') \cdot P(\x',\x)>0$.

	The second case is when $a = 0$ while at least one of $v \in \mathcal{N}(u)$ has a driver. We have a non-zero path in this case, with the changes in the values of the ordered pair $(u, v, u^*)$ as follows.
		\[
			\x=(0, h \neq 0, b) \rightarrow \x'=(0, h-1, b+1) \rightarrow \x=(0, h, b).
		\] 
		
		Using the \emph{hot-spot} assumption, we have that  $P^2(\x,\x)>P(\x,\x') \cdot P(\x',\x)>0$.
		
	The last case is when no drivers exist in $u$ and its neighborhood $\mathcal{N}(u)$. In this case,  any algorithm that dispatches using the drivers in this location will not find a driver and hence $P(\x, \x) = 1 > 0$ for all algorithms.

\item ($0<b$). Then observe that there exists a feasible loop on $\x$ as follows (using the \emph{hot-spot} assumption).
\[
	\x=(a, b) \rightarrow \x'=(a+1,b-1) \rightarrow \x=(a, b).
\]
Therefore we have $P^2(\x,\x)>P(\x,\x') \cdot P(\x',\x)>0$.
 \end{enumerate}
 
 \end{itemize}
 
\xhdr{$\sM(\alp)$ is aperiodic.} From Lemma 1.6 in~\cite{MCBook}, it suffices to show that there exists a state with period $1$. We do so by considering two cases on the relationship between $m$ and $c$.

\begin{itemize}
\item \textbf{Case $m \ge c$.} Consider a given state $\x$ with $x_{u^*}=c$. We have that $P(\x,\x) \ge \sum_{u \neq u^*} p_{(u,u^*)}>0$.
The above inequality is valid for the following reason. When we are at state $\x$ with $x_{u^*}=c$, then for any online request type $r$ with ending location $u^*$, any algorithm has to reject $r$, and hence we remain in state $\x$ for the next round. 

\item \textbf{Case $m<c$.} In this case, we consider a state $\x$ with $x_{u^*}=m$ while $x_u=0$ for all $u \neq u^*$. Like before we have that $P(\x,\x) \ge \sum_{u \neq u^*} \alp p_{(u,u^*)}>0$. 
\end{itemize}
Thus for each of the two cases we can always identify a state $\x$ with $P(\x,\x)>0$, implying that $\x$ has a period of $1$. 
\end{proof}

\xhdr{Proof of Theorem~\ref{thm:obj}.}

\begin{proof}
When $c=m$, we have that $\Ome=\{\x \in \mathbb{Z}_{+}^n: \sum_u x_u=m, x_u \ge 0, \forall u\}$. Note that 
$$|\Omega|={m+n-1 \choose n-1}={m+n-1 \choose m}.$$

Let $\gam_{u}$ be the probability that location $u$ has  at least one driver in the limiting distribution. We have that
\begin{align*}
\gam_u &=\Pr[X_u^{(\infty)} \ge 1]= \sum_{\x: x_u \ge 1} \pi^*(\x). \\
&=1-\frac{|\{\x \in \Ome: x_u=0\}|}{|\Ome|}
=1-\frac{{m+n-2 \choose m}}{{m+n-1 \choose m}}=\frac{m}{n+m-1}.
\end{align*}
Thus we claim that $\gam_u=\frac{m}{n+m-1}$ for all $u$. Notice that in our case $c=m$, $\gam_{u,v}=\gam_u$ for all $u,v \in \cU$ since if there is a driver at some location $u$, it surely implies that all other locations have less than $c$ drivers. 
Therefore,
\begin{align}
\OBJ(\nadap(\alp))&=\sum_{r=(u,v) \in \cR} p_r \cdot w_r \cdot \Big(  \alp \gam_{u,v}+ \frac{1-\alp}{4} \sum_{k \in \cN(u)} \gam_{k,v} \Big) \label{eqn:unif-a} \\
&=\sum_{r=(u,v) \in \cR} p_r \cdot w_r \cdot \Big(  \alp \gam_{u}+ \frac{1-\alp}{4} \sum_{k \in \cN(u)} \gam_{k} \Big) \label{eqn:unif-c}\\
&=\sum_{r \in \cR} p_r \cdot w_r \cdot \frac{m}{n+m-1}=\frac{m \cdot p}{n+m-1} \sum_{r \in \cR} w_r. \label{eqn:unif-b}
\end{align}

Note that Equality \eqref{eqn:unif-b} assumes that each $u$ has $4$ neighbors with Manhattan distance $1$.
\end{proof}

\xhdr{Proof of Lemma \ref{lem:mix}}.
We use Lemma \ref{lem:path-coupling-a} to prove Lemma \ref{lem:mix}. When $p_r=1/n^2$  for all $r \in \cR$, we have that the transition matrix $P$ of $\sM(\alp)$ is reduced  to the following form: $P(\x,\y)=1/n^2$ iff $(\x,\y)$ is a $(u,u')$-neighbor for all possible $(u,u')$ with $u \neq u'$ and $P(\x,\y)=0$ if $\x \neq \y$ and $(\x,\y)$ is not a $(u,u')$-neighbor. For each given state $\x$, let $\cN(\x)$ be set of states $\y$ such that $(\x,\y)$ is a $(u,u')$-neighbor for some $(u,u')$. The Markov Chain $\sM(\alp)$ thus can be viewed as follows: we start with an initial state $\x_0 \in \Ome$; suppose at time $t$ we are at state $\X^{(t)}$; then $\X^{(t+1)}=\y$ with probability $1/n^2$ for each $\y \in \cN(\X^{(t)})$ and $\X^{(t+1)}=\X^{(t)}$ with the remaining probability. 

Define a natural metric $\bd$ over $\Ome \times \Ome$ as follows: $\bd(\x,\y)=\sum_u|x_u-y_u|$. Let $\cS$ be the set of all possible $(u,u')$-neighbors on $\Ome$. Observe that $\bd(\x,\y) \ge 2$ for any $\x \neq \y$ and $\bd(\x,\y)=2$ for all $(\x,\y)\in \cS$. We can verify that $\bd(\x,\y) =\sum_u|x_u-y_u| \le D=2m$ for all possible $(\x,\y)$. To apply Lemma~\ref{lem:path-coupling-a}, we design a natural coupling $(\X,\Y) \rightarrow (\X',\Y')$  over all possible $(\X,\Y) \in \cS$ such that $\X'$ and $\Y'$ each is a one-step transition from $\X$ and $\Y$ respectively according the transition matrix $P$. In other words, $\X'$ and $\Y'$ are random states each follows the respective distribution $P(\X, \cdot)$ and $P(\Y, \cdot)$. Therefore we claim that $\X'$ and $\Y'$ is a faithful copy of $\sM(\alp)$ with respective $P$. In the following, we prove that $\E[\bd(\X',\Y')] \le (1-\frac{1}{n^2}) \bd(\X,\Y)$. 

\begin{proof}
Consider the case $c=2$. Let $(\X,\Y) \in \cS$ is a $(a,b)$-neighbor such that $Y_{a}=X_a-1$ and $Y_b=X_b+1$. Consider the following alternate view of the transition of $\sM(\alp)$. Suppose we are at state $\X$; each round we sample a request $r=(u,u')$ with probability $1/n^2$; if $X_u \ge 1$ and $X_{u'} \le c-1$,  then set $X'_u=X_u-1$, $X'_{u'}=X_{u'}+1$ and $X'_{v}=X_v$ for all $v \notin\{u,u'\}$, such that $(\X, \X')$ forms a $(u,u')$-neighbor; otherwise set $\X'=\X$. Consider the following cases.

	\textbf{Case 1: $X_a=1, X_b=0$.} Consider an arbitrary request $r=(u,u')$. 
If $u=a$ and $u'=b$, then we have that $\X'=\Y$ and $\Y'=\Y$ and thus $\bd(\X',\Y')=0$, which occurs with probability $1/n^2$. Similarly, if $u=b$ and $u'=a$, we have $\X'=\X$ and $\Y'=\X$ and $\bd(\X',\Y')=0$, which occurs with probability $1/n^2$. We can verify that for all other request types, $\bd(\X',\Y') \le 2=\bd(\X,\Y)$. Thus, we claim that $\E[\bd(\X',\Y')] \le(1-\frac{2}{n^2}) \E[\bd(\X,\Y)]$. The details are summarized in the table below.

\begin{table}[h!]
\caption{$(\X,\Y) \in \cS$: $X_a=1, X_b=0$ and $Y_a=0, Y_b=1$.}
\begin{center}
 \begin{tabular}{cc }
   $r=(u,u')$ & $\bd(\X',\Y')$ \\  \hline
  $u \notin \{a,b\}$  &  2  \\ 
  $u=a, u' \neq b$ & 2 \\ 
       $u=a, u'=b$  & 0 \\ 
      $u=b, u' \neq a$ & 2 \\ 
            $u=b, u'=a$  & 0 \\ 
    \hline    
  \end{tabular}
\end{center}
\end{table}

	\textbf{Case 2: $X_a=1, X_b=1$.} Applying a similar analysis, we claim that $\E[\bd(\X',\Y')] \le(1-\frac{1}{n^2}) \E[\bd(\X,\Y)]$. The details are as follows.

\begin{table}[h]
\caption{$(\X,\Y) \in \cS$: $X_a=1, X_b=1$ and $Y_a=0, Y_b=2$.}
\begin{center}
 \begin{tabular}{cc }
   $r=(u,u')$ & $\bd(\X',\Y')$ \\  \hline
  $u \notin \{a,b\}$  &  2  \\ 
  $u=a, u' \neq b$ & 2 \\ 
   $u=a, u'=b$  & 0 \\ 
      $u=b$ & 2 \\
    \hline    
  \end{tabular}
\end{center}
\end{table}

	\textbf{Case 3: $X_a=2, Y_a=1$ and $X_b=1, Y_b=2$.} 
	
	Consider the following cases summarized in the table below. Thus we have $\E[\bd(\X',\Y')] \le(1-\frac{2}{n^2}) \E[\bd(\X,\Y)]$. 

\begin{table}[h!]
\caption{$(\X,\Y) \in \cS$: $X_a=2, X_b=1$ and $Y_a=1, Y_b=2$}
\begin{center}
 \begin{tabular}{cc }
   $r=(u,u')$ & $\bd(\X',\Y')$ \\  \hline
  $u \notin \{a,b\}$  &  2  \\ 
  $u=a, u' \neq b$ & 2 \\ 
      $u=a, u'=b$  & 0 \\    
      $u=b, u' \neq a$ & 2 \\ 
      $u=b, u'=a$ & 0\\ \hline    
  \end{tabular}
\end{center}
\end{table}

	\textbf{Case 4: $X_a=2, X_b=0$ and $Y_a=1, Y_b=1$.} Thus from the table below, we have $\E[\bd(\X',\Y')] \le(1-\frac{1}{n^2}) \E[\bd(\X,\Y)]$.

\begin{table}[h!]
\caption{$(\X,\Y) \in \cS$: $X_a=2, X_b=0$ and $Y_a=1, Y_b=1$}
\begin{center}
 \begin{tabular}{cc }
   $r=(u,u')$ & $\bd(\X',\Y')$ \\  \hline
  $u \notin \{a,b\}$  &  2  \\ 
  $u=a$ & 2 \\ 
      $u=b, u' \neq a$ & 2 \\ 
   $u=b, u'=a$  & 0 \\ \hline    
  \end{tabular}
\end{center}
\end{table}

From the case based analysis above, we have $\E[\bd(\X',\Y')] \le \beta \E[\bd(\X,\Y)]$ with $\beta=1-\frac{1}{n^2}$. By applying a similar analysis, we get the result for the case when $c=1$. Thus, by Lemma \ref{lem:path-coupling-a}, we have $\tau(\ep) \le n^2 \ln(2m/\ep) $.
\end{proof}

\section{Additional Experiments}\label{sup:experiments}

In this section we describe the results from the additional experiments. In particular, we conduct both simulated and real-world experiments similar to the main section for the periods Morning and Evening. Figures~\ref{fig:startingMorning}, \ref{fig:endingMorning} describes the request pattern for Afternoon while Figures~\ref{fig:startingEvening}, \ref{fig:endingEvening} show the patterns for Evening. Figures~\ref{fig:exp6Morning}, \ref{fig:exp7Morning} denote the results from using the Afternoon data for simulation while Figures \ref{fig:exp6Evening}, \ref{fig:exp7Evening} denote the results when the Evening data is used for simulation. Figures~\ref{fig:exp3Morning}, \ref{fig:exp3Evening} denotes the performance on the real-time requests for the Afternoon and Evening periods respectively. In general, all results follow the same pattern as those reported in the main section.

\begin{figure*}[!h]
	\minipage{0.48\textwidth}
	\centering
	 \includegraphics[width=\linewidth]{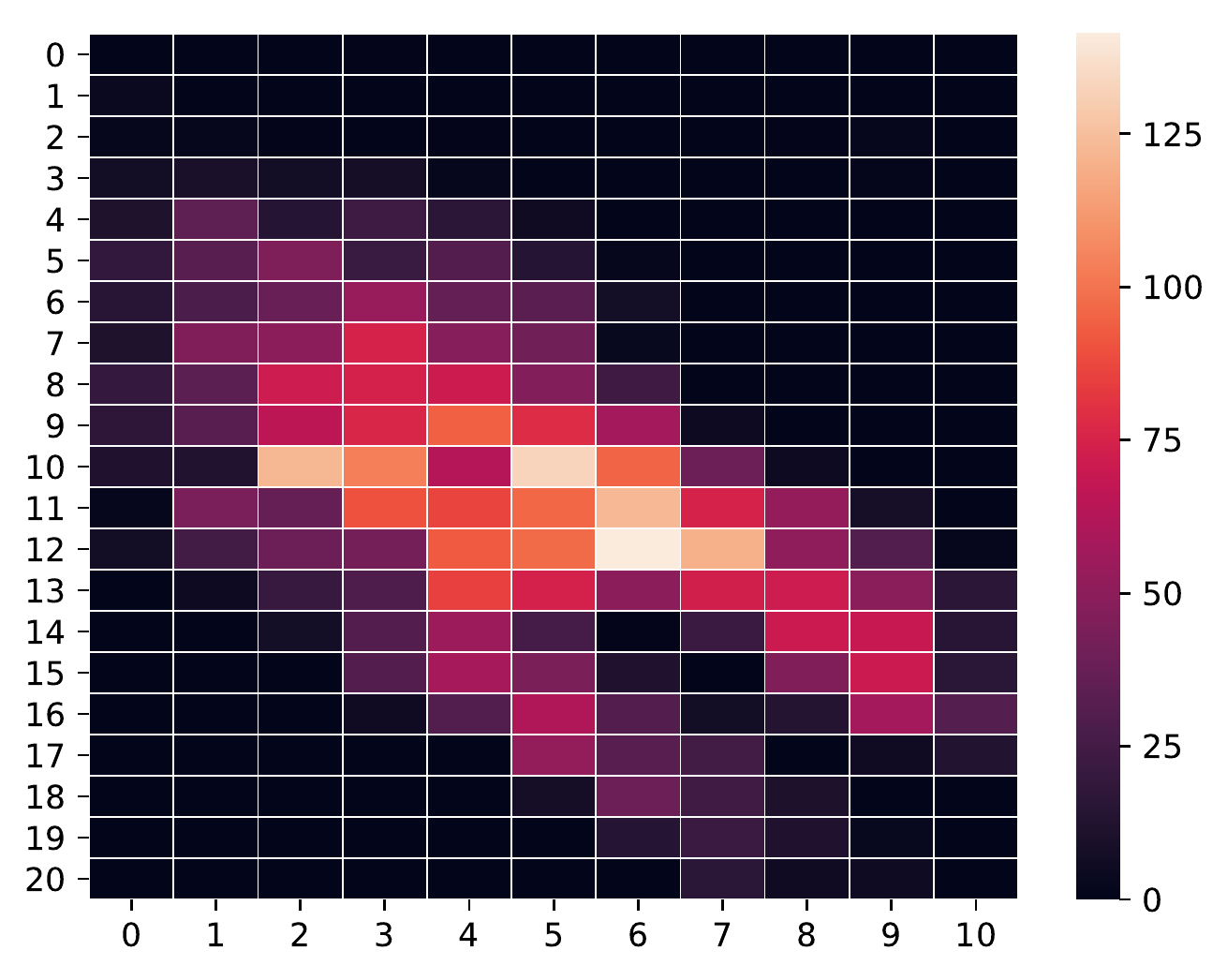}
	 \caption{Heat map of Starting Location of rides in NYC (11:00AM - 3:00PM)}
	\label{fig:startingMorning}
	\endminipage\hfill
	 \vspace{5mm}
	\minipage{0.48\textwidth}
	\centering
	\includegraphics[width=\linewidth]{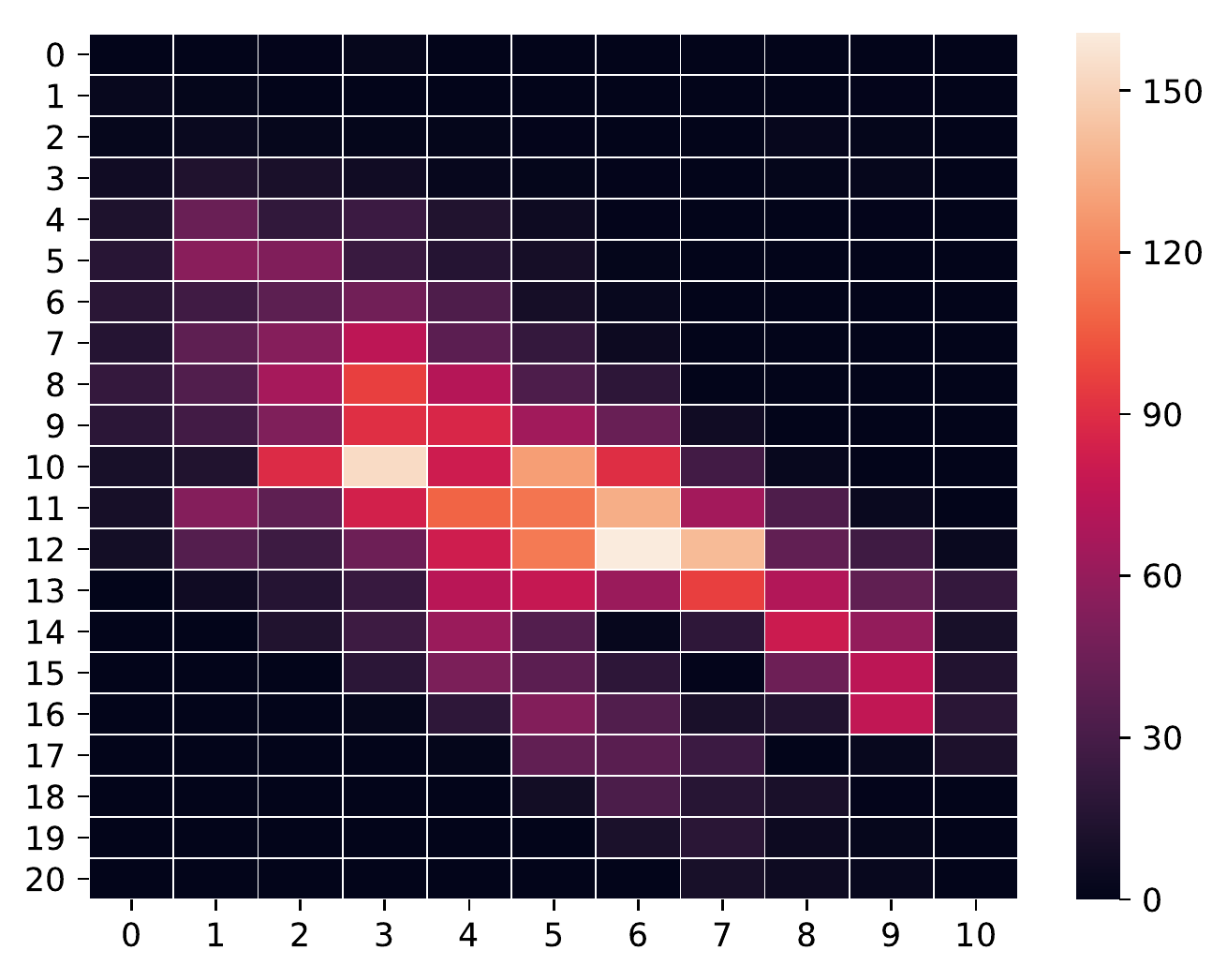}
	\caption{Heat map of Ending Location of rides in NYC (11:00AM - 3:00PM)}
	\label{fig:endingMorning}
	\endminipage
	 \vspace{5mm}
\end{figure*}

\begin{figure*}[!h]
	\minipage{0.48\textwidth}
	\centering
	 \includegraphics[width=\linewidth]{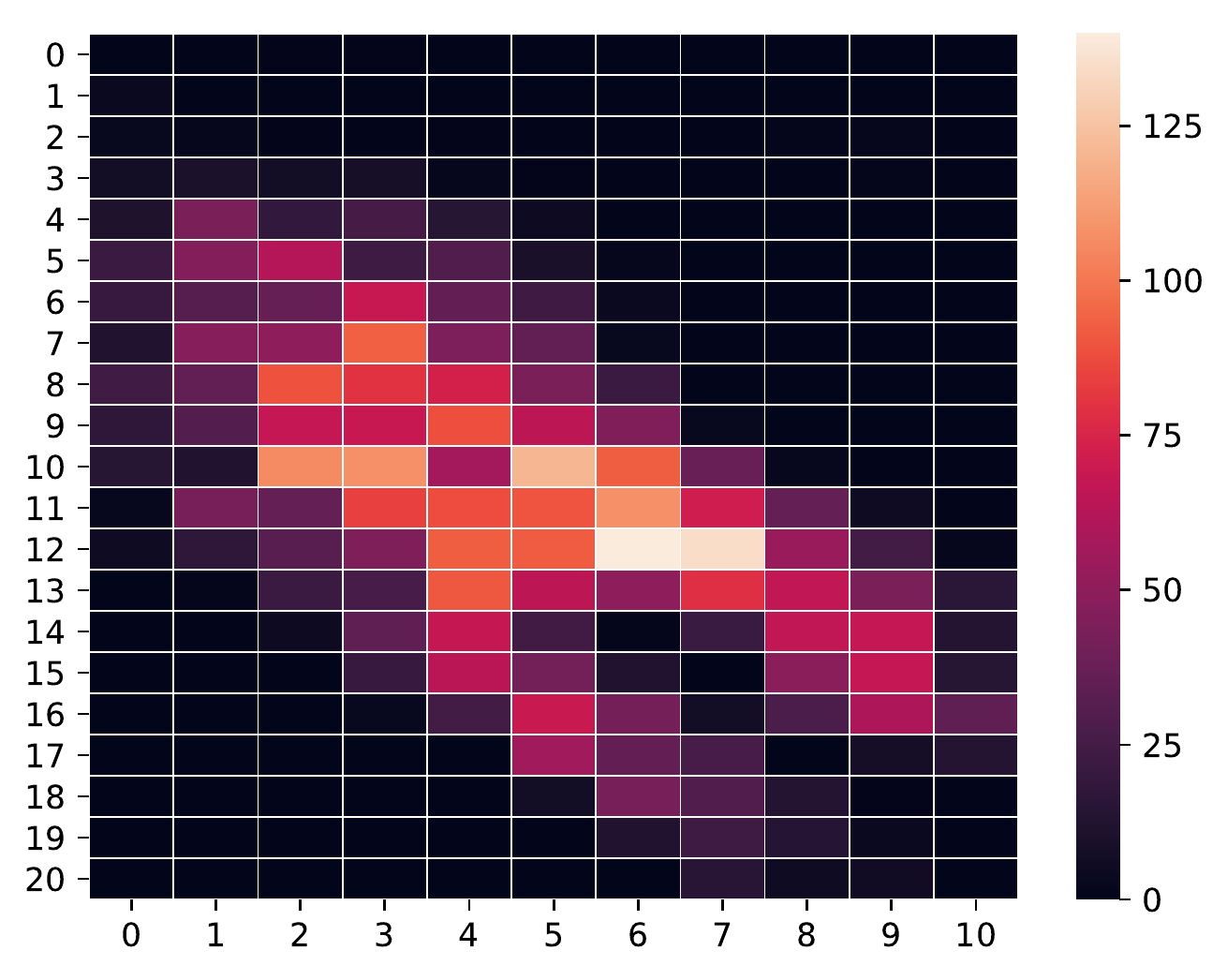}
	 \caption{Heat map of Starting Location of rides in NYC (3:00PM - 7:00PM)}
	\label{fig:startingEvening}
	\endminipage\hfill
	 \vspace{5mm}
	\minipage{0.48\textwidth}
	\centering
	\includegraphics[width=\linewidth]{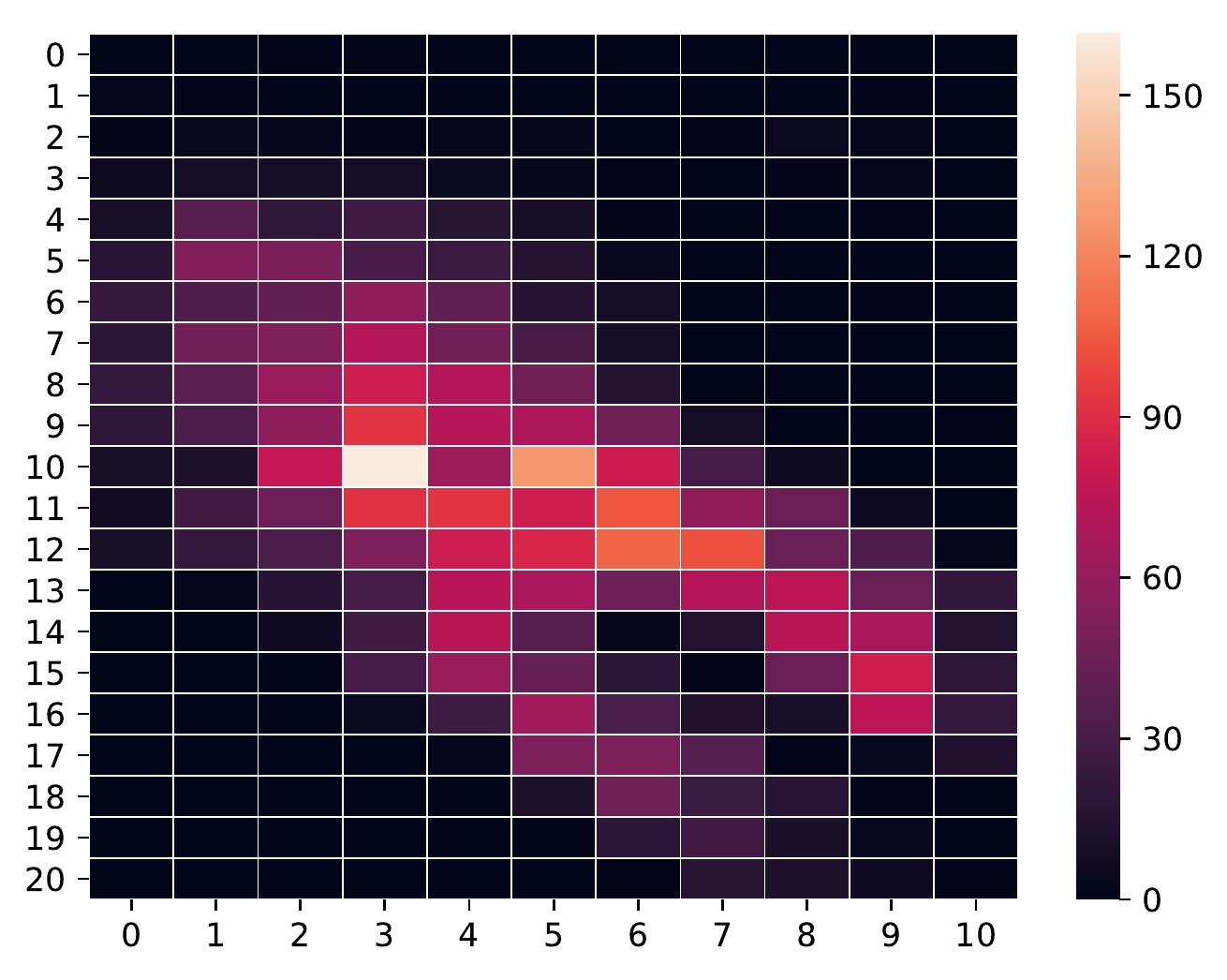}
	\caption{Heat map of Ending Location of rides in NYC (3:00PM - 7:00PM)}
	\label{fig:endingEvening}
	\endminipage
	 \vspace{5mm}
\end{figure*}

\begin{figure*}[!h]
	\minipage{0.48\textwidth}
	\centering
	 \includegraphics[width=\linewidth]{./Experiments/Experiment6_Afternoon}
	 \caption{Obj.~\eqref{eqn:OBJ-t} on simulation with real-data parameters (11:00AM-3:00PM)}
	\label{fig:exp6Morning}
	\endminipage\hfill
	 \vspace{5mm}
	\minipage{0.48\textwidth}
	\centering
	\includegraphics[width=\linewidth]{./Experiments/Experiment7_Afternoon}
	\caption{Obj.~\eqref{eqn:OBJ} on simulation with real-data parameters (11:00AM-3:00PM)}
	\label{fig:exp7Morning}
	\endminipage
	 \vspace{5mm}
\end{figure*}

\begin{figure*}[!h]
	\minipage{0.48\textwidth}
	\centering
	 \includegraphics[width=\linewidth]{./Experiments/Error/Experiment6_error_Afternoon}
	 \caption{Error in Obj.~\eqref{eqn:OBJ-t} on simulation with real-data parameters (11:00AM-3:00PM)}
	\label{fig:exp6MorningError}
	\endminipage\hfill
	 \vspace{5mm}
	\minipage{0.48\textwidth}
	\centering
	\includegraphics[width=\linewidth]{./Experiments/Error/Experiment7_error_Afternoon}
	\caption{Error in Obj.~\eqref{eqn:OBJ} on simulation with real-data parameters (11:00AM-3:00PM)}
	\label{fig:exp7MorningError}
	\endminipage
	 \vspace{5mm}
\end{figure*}

\begin{figure*}[!h]
	\minipage{0.48\textwidth}
	\centering
	 \includegraphics[width=\linewidth]{./Experiments/Experiment6_Evening}
	 \caption{Obj.~\eqref{eqn:OBJ-t} on simulation with real-data parameters (3:00PM-7:00PM)}
	\label{fig:exp6Evening}
	\endminipage\hfill
	 \vspace{5mm}
	\minipage{0.48\textwidth}
	\centering
	\includegraphics[width=\linewidth]{./Experiments/Experiment7_Evening}
	\caption{Obj.~\eqref{eqn:OBJ} on simulation with real-data parameters (3:00PM-7:00PM)}
	\label{fig:exp7Evening}
	\endminipage
	 \vspace{5mm}
\end{figure*}

\begin{figure*}[!h]
	\minipage{0.48\textwidth}
	\centering
	 \includegraphics[width=\linewidth]{./Experiments/Error/Experiment6_error_Evening}
	 \caption{Error in Obj.~\eqref{eqn:OBJ-t} on simulation with real-data parameters (3:00PM-7:00PM)}
	\label{fig:exp6EveningError}
	\endminipage\hfill
	 \vspace{5mm}
	\minipage{0.48\textwidth}
	\centering
	\includegraphics[width=\linewidth]{./Experiments/Error/Experiment7_error_Evening}
	\caption{Error in Obj.~\eqref{eqn:OBJ} on simulation with real-data parameters (3:00PM-7:00PM)}
	\label{fig:exp7EveningError}
	\endminipage
	 \vspace{5mm}
\end{figure*}

\begin{figure*}[!h]
	\minipage{0.48\textwidth}
	\centering
	 \includegraphics[width=\linewidth]{./Experiments/Experiment3_Afternoon}
	 \caption{Obj.~\eqref{eqn:OBJ-t} on real-dataset (11:00AM-3:00AM). $x$-axis in log-scale}
	\label{fig:exp3Morning}
	\endminipage\hfill
	 \vspace{5mm}
	\minipage{0.48\textwidth}
	\centering
	\includegraphics[width=\linewidth]{./Experiments/Experiment3_Evening}
	\caption{Obj.~\eqref{eqn:OBJ} on real-dataset (3:00PM-7:00PM). $x$-axis in log-scale}
	\label{fig:exp3Evening}
	\endminipage
	 \vspace{5mm}
\end{figure*}

\begin{figure*}[!h]
	\minipage{0.48\textwidth}
	\centering
	 \includegraphics[width=\linewidth]{./Experiments/Error/Experiment3_error_Afternoon}
	 \caption{Error in Obj.~\eqref{eqn:OBJ-t} on real-dataset (11:00AM-3:00PM). $x$-axis in log-scale}
	\label{fig:exp3MorningError}
	\endminipage\hfill
	 \vspace{5mm}
	\minipage{0.48\textwidth}
	\centering
	\includegraphics[width=\linewidth]{./Experiments/Error/Experiment3_error_Evening}
	\caption{Error in Obj.~\eqref{eqn:OBJ} on real-dataset (3:00PM-7:00PM). $x$-axis in log-scale}
	\label{fig:exp3EveningError}
	\endminipage
	 \vspace{5mm}
\end{figure*}

\begin{figure*}[!h]
	\minipage{0.48\textwidth}
	\centering
	 \includegraphics[width=\linewidth]{./Experiments/Error/Experiment4_error_Afternoon}
	 \caption{Error in Obj.~\eqref{eqn:OBJ-t} on simulated data}
	\label{fig:exp4Error}
	\endminipage\hfill
	 \vspace{5mm}
	\minipage{0.48\textwidth}
	\centering
	\includegraphics[width=\linewidth]{./Experiments/Error/Experiment5_error_Afternoon}
	\caption{Error in Obj.~\eqref{eqn:OBJ} on simulated data}
	\label{fig:exp4Error}
	\endminipage
	 \vspace{5mm}
\end{figure*}

\section{Details on Deep-Q-Learning}

We used deep Q-learning with experience replay, trained in a custom OpenAI Gym
environment. Memory size was 2000 events, with a batch size of 64 for experience
replay. The discount factor was 0.9. The initial exploration rate was 1.0,
decaying with a decay factor of .995 to a minimum of 0.01. The learning rate for
the Q-learning update was 0.001. Training consisted of 5 episodes, each randomly
initialized and continuing for 2500 steps. The network itself had 2 dense hidden
layers with 24 units and ReLU activation, and a final layer with 6 outputs and a
linear activation; the loss was quadratic. As with the $\nadap(\alpha)$ policy, any
attempt to make a match that would be illegal was simply ignored and given
reward 0.

\section{Details on Value Iteration}
We used standard value iteration algorithm to solve for an exact optimal policy on small grids under uniform IID requests. To illustrate the optimal policy, we present the distribution of different measures of activity as heatmaps on the grid. We initialize the state with 1 car at left-top location and zero car at remaining locations. We then simulate an episode of 1000 periods under the optimal policy by drawing uniform IID requests. The measure ``time covered'' is the percentage of periods in which the location is occupied by at least one car. The measure ``drop rate'' is the percentage of periods in which the location is either a starting or ending location. The measure ``start location'' is the percentage of periods in which the location is a starting location. The reason for considering small grids is  because of the huge number of states. In particular, 2 $\times$ 2 grid with capacity 2, capacity 3 and 3 $\times$ 3 grid with capacity 1 have 324, 1024 and 41472 states respectively.

\begin{figure}[!ht]%
	 \subfloat{{\includegraphics[scale=0.35]{./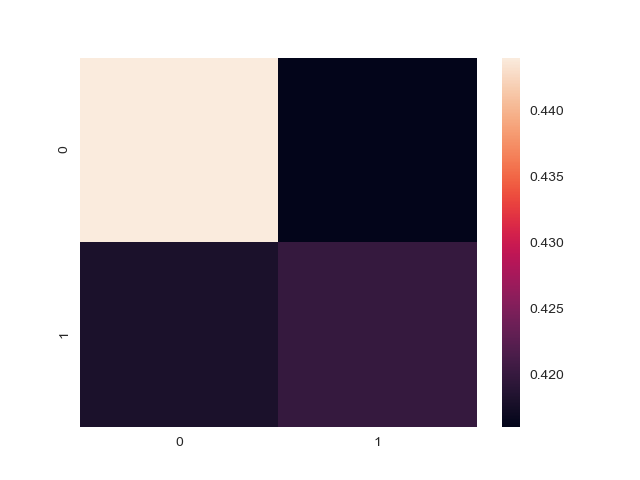} }}%
     \subfloat{ {\includegraphics[scale=0.35]{./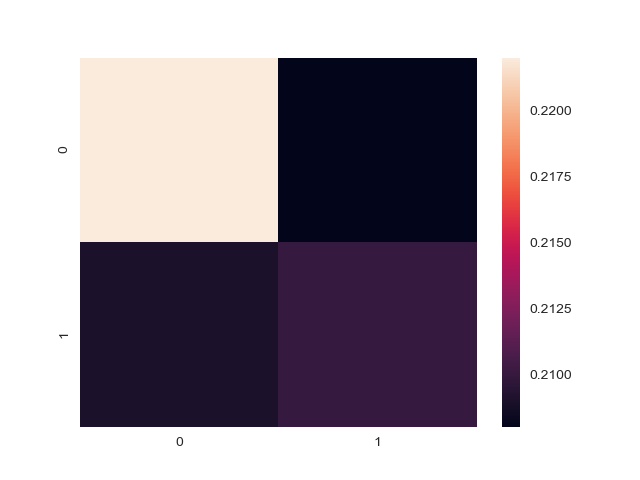} }}%
     \subfloat{ {\includegraphics[scale=0.35]{./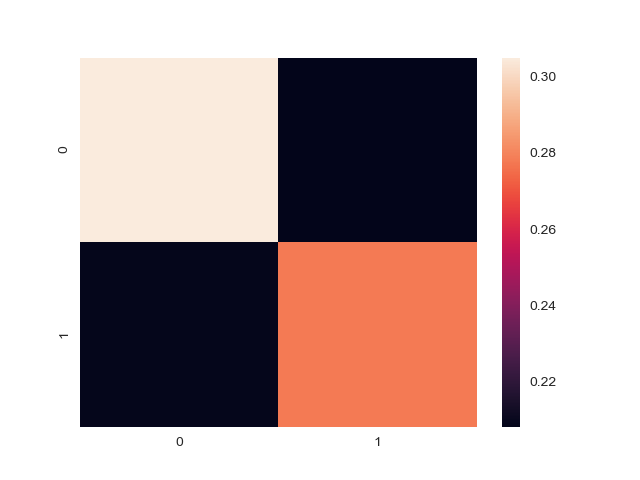} }}%
    \caption{2 $\times$ 2 grids with capacity 3; black to white represent low to high values. Plots 1, 2, 3 represent the distribution of requests with either starting or ending at a cell, requests starting at a cell and percentage of time spent by cars in a cell respectively by the optimal policy.}%
    \label{fig:heatmaps2}%
    \vspace{5mm}
\end{figure}

\begin{figure}[!ht]%
	 \subfloat{{\includegraphics[scale=0.35]{./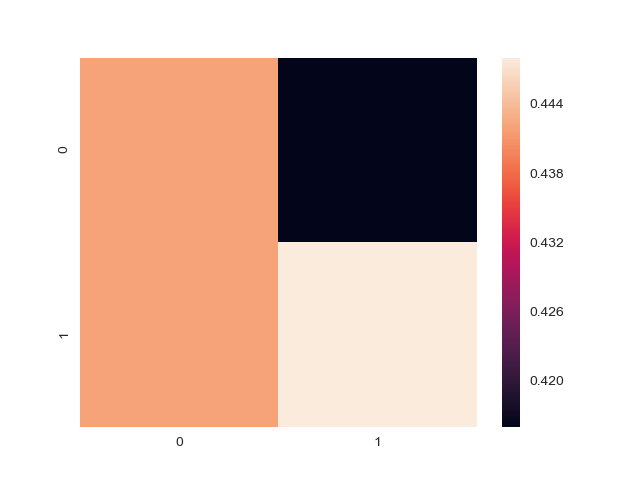} }}%
     \subfloat{ {\includegraphics[scale=0.35]{./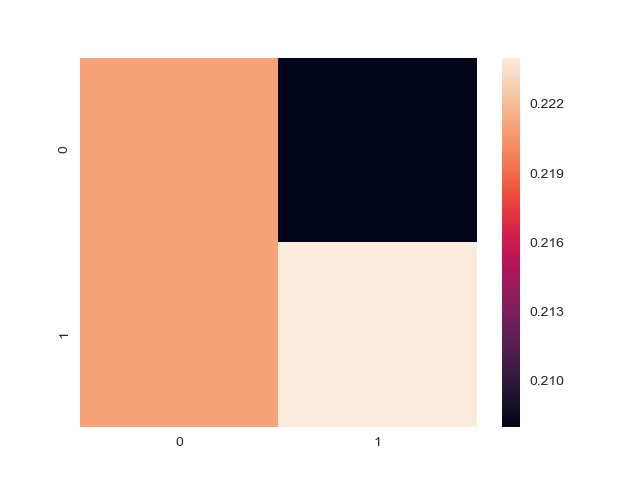} }}%
     \subfloat{ {\includegraphics[scale=0.35]{./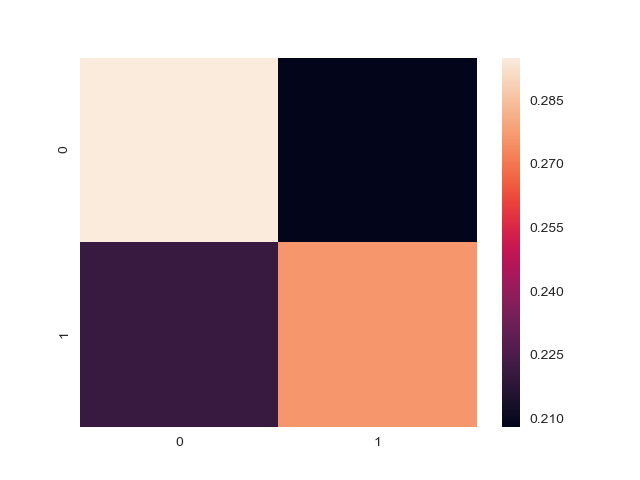} }}%
    \caption{2 $\times$ 2 grids with capacity 2; black to white represent low to high values. Plots 1, 2, 3 represent the distribution of requests with either starting or ending at a cell, requests starting at a cell and percentage of time spent by cars in a cell respectively by the optimal policy.}%
    \label{fig:heatmaps2}%
     \vspace{5mm}
\end{figure}

\bibliographystyle{plainnat}
{ \bibliography{refs}}
	
	\end{document}